\documentclass[a4paper,onecolumn,11pt,allowfontchangeintitle,accepted=2022-02-01]{quantumarticle}
\pdfoutput=1
\usepackage[utf8]{inputenc}
\usepackage[english]{babel}
\usepackage[T1]{fontenc}
\usepackage{amsmath}
\usepackage{hyperref}
\usepackage[numbers, sort&compress]{natbib}

\usepackage{tikz}
\usepackage{lipsum}
\usepackage{color, soul}

% ADDED PACKAGES
\usepackage{amsthm}
\usepackage{physics}
\usepackage{comment}
\usepackage{graphicx}% Include figure files
\usepackage{dcolumn}% Align table columns on decimal point
\usepackage{bm}% bold math
\usepackage[section]{placeins}
\allowdisplaybreaks

% NOTATION
\DeclareMathOperator{\Hil}{\mathcal{H}} % Hilbert space
\DeclareMathOperator{\B}{\mathcal{B}} % Set of bounded linear operators
\DeclareMathOperator{\St}{\mathcal{D}} % Set of density matrices
\DeclareMathOperator{\N}{\mathcal{N}} % Used for arbitrary channels
\DeclareMathOperator{\M}{\mathcal{M}} % Used for arbitrary channels
\DeclareMathOperator{\Sp}{\mathcal{P}} % Used for the Swap and Prepare channel
\DeclareMathOperator{\R}{\mathcal{R}} % Used for the Swap and Prepare channel

 % Used for the Swap and Prepare channel
\newcommand{\cvenn}[2]{\mathcal{F}^c(#1|#2)} % for class cvenn, write the names of the system after. 
\newcommand{\acvenn}[2]{\mathcal{F}^{ac}(#1|#2)} % for class acvenn, write the names of the system after. 
\newcommand{\seps}[2]{\mathcal{F}^s(#1|#2)} % for the class of separable states (extra s indicates states)

 % For use without parameters 
\DeclareMathOperator{\sep}{SEP}
\DeclareMathOperator{\uni}{UNI}
\DeclareMathOperator{\ncve}{NCVE}
\DeclareMathOperator{\eb}{EB}

 % For use with parameters
\newcommand{\sepp}[2]{\sep(#1|#2)}
\newcommand{\ebp}[1]{\eb(#1)}
\newcommand{\unip}[2]{\uni(#1|#2)}

\newcommand{\ent}{S}
\newcommand{\cve}[2]{S_{#1|#2}}
\newcommand{\re}[2]{S \left(#1 \,\middle\| \,#2 \right)}

% Classes of free operations, when mapping to same Hilbert space
\newcommand{\cvennfree}[2]{\mathcal{O}^c_{max}(#1|#2)}
\newcommand{\cvenncompfree}[2]{\mathcal{O}^c_{cmax}(#1|#2)} 
  
\newcommand{\acvenncompfree}[2]{\mathcal{O}^{ac}_{cmax}(#1|#2)} 

\newcommand{\sepscompfree}[2]{\mathcal{O}^{s}_{cmax}(#1|#2)}
\newcommand{\ncvep}[2]{\ncve(#1|#2)} 

% Classes of free operations, mapping to different Hilbert spaces (d at the end) 
\newcommand{\cvennfreed}[4]{\mathcal{O}^c_{max}(#1|#2 \rightarrow #3|#4)}
\newcommand{\cvenncompfreed}[4]{\mathcal{O}^c_{cmax}(#1|#2 \rightarrow #3|#4)}

\newcommand{\sepsfreed}[4]{\mathcal{O}^{s}_{max}(#1|#2 \rightarrow #3|#4)}
\newcommand{\sepscompfreed}[4]{\mathcal{O}^{s}_{cmax}(#1|#2 \rightarrow #3|#4)}
\newcommand{\ncvepd}[4]{\ncve(#1|#2 \rightarrow #3|#4)}
\newcommand{\seppd}[4]{\sep(#1|#2 \rightarrow #3|#4)}
\newcommand{\ebpd}[2]{\eb(#1 \rightarrow #2)}

\DeclareMathOperator{\id}{id}

% THEOREMS
\newtheorem{theorem}{Theorem}

\newtheorem{lemma}{Lemma}

\newtheorem{definition}{Definition}

% To change font size of the title
\newcommand*{\TitleFont}{%
      \fontsize{20}{20}%
      \selectfont}

% To ensure footnotes don't break
\interfootnotelinepenalty=10000

\begin{document}

\title{\TitleFont A-unital Operations and Quantum Conditional Entropy}

\author{Mahathi Vempati}
\affiliation{Centre for Quantum Science and Technology,\\International Institute of Information Technology-Hyderabad, Gachibowli, Telangana-500032, India.}
\affiliation{Center for Computational Natural Sciences and Bioinformatics,\\International Institute of Information Technology-Hyderabad, Gachibowli, Telangana-500032, India.}
\email{mahathi.vempati@research.iiit.ac.in}

\author{Saumya Shah}
\affiliation{Department of Physics, \\ Indian Institute of Technology Guwahati, Guwahati-781039, Assam, India.}
\email{shah18@iitg.ac.in}

\author{Nirman Ganguly}
\affiliation{Department of Mathematics, \\ Birla Institute of Technology and Science Pilani, Hyderabad Campus, Telangana-500078, India.}
\email{nirmanganguly@gmail.com}

\author{Indranil Chakrabarty}
\affiliation{Centre for Quantum Science and Technology,\\International Institute of Information Technology-Hyderabad, Gachibowli, Telangana-500032, India.}
\affiliation{Center for Security, Theory and Algorithmic Research, \\International Institute of Information Technology-Hyderabad, Gachibowli, Telangana-500032, India.}
\email{indranil.chakrabarty@iiit.ac.in}
\
\maketitle

\begin{abstract}
Negative quantum conditional entropy states are key ingredients for information theoretic tasks such as superdense coding, state merging and one-way entanglement distillation. In this work, we ask: how does one detect if a channel is useful in preparing negative conditional entropy states? We answer this question by introducing the class of A-unital channels, which we show are the largest class of conditional entropy non-decreasing channels. We also prove that A-unital channels are precisely the completely free operations for the class of states with non-negative conditional entropy. Furthermore, we study the relationship between A-unital channels and other classes of channels pertinent to the resource theory of entanglement. We then prove similar results for ACVENN: a previously defined, relevant class of states and also relate the maximum and minimum conditional entropy of a state with its von Neumann entropy.

The definition of A-unital channels naturally lends itself to a procedure for determining membership of channels in this class. Thus, our work is valuable for the detection of resourceful channels in the context of conditional entropy. 
      
\end{abstract}

% Quantum states can be classified as useful or not useful for several information theoretic protocols such as superdense coding, state merging, distributed private randomness distillation and one-way entanglement distillation based on their conditional entropy. States that possess negative conditional entropy are useful for these tasks, whereas states with non-negative conditional entropy are not. In this work, we show that the largest class of completely free channels for the set of states with non-negative entropy is exactly equal to the conditional entropy non-decreasing channels. This class is elegantly characterized by a property we call $A$-unitality, which makes it amenable to detection. We discuss the relationship between $A$-unital channels, separable channels and entanglement-breaking channels.    

\section{Introduction}
The myriad information-theoretic applications of entanglement \cite{Horodecki2009} have elevated it from being merely an interesting physical phenomenon to that of a resource. In the resource-theoretic study of entanglement, the separable states (quantum states that can be written as convex combinations of pure product states) are called the free states, and those outside this set, the entangled states, are known as the resource states \cite{Chitambar2019}. The resource theory also defines various classes of free operations, that is, quantum operations that cannot convert a free state to a resourceful one. Local Operations and Classical Communication (LOCC) is a physically defined class of free operations \cite{Chitambar2014, Chitambar2019, Horodecki2009}, characterized by the practical difficulty of communicating quantum information.  A superset of LOCC is the class of separable channels \cite{Cirac2001, Chitambar2019}, which has a more appealing mathematical structure than LOCC and is therefore used to prove properties of LOCC itself \cite{Cirac2001, Vedral1998}. Entanglement-breaking channels form another important class of channels. These channels destroy all entanglement between a system and its environment \cite{Horodecki2003}. 

Once these classes of channels have been defined, it is important to be able to decide whether or not a given channel belongs to any of these free classes. The quantum channel detection problem asks: given a channel and a well-defined class of channels, does the channel belong to the class \cite{Macchiavello2013, Lee2020, Montanaro2016, Milazzo2020}? The channel itself maybe provided as a blackbox, as in a laboratory setting, where it can be characterized by measuring its output on various input states, or as a full description (for instance, via its Kraus decomposition or Choi matrix). Unfortunately, for some classes defined with respect to entanglement, this problem is notoriously difficult. For example, it is currently unknown how to decide whether a given channel belongs to LOCC \cite{Chitambar2019}, and the problem of membership of a channel in the entanglement-breaking class is NP Hard \cite{Gharibian2010}. There have been witness-based probabilistic approaches to deal with this problem, which are useful when some apriori information about a channel is known \cite{Macchiavello2013}. 

The interest in defining free operations for entanglement and studying mechanisms for detection of these operations is in part due to the resourcefulness of entanglement. While entanglement is certainly important in the characterization of quantum states for information-theoretic tasks, there are several phenomena for which merely possessing entanglement is not a sufficient condition to determine usefulness of a quantum state. Often, a state needs to have other properties beyond being entangled in order to be useful. One such property is the negativity of the quantum conditional entropy (also known as the conditional von Neumann entropy) of a quantum state \cite{Cerf1997}. The quantum conditional entropy (henceforth referred to as conditional entropy where there is no ambiguity) of a quantum state $\rho_{AB}$ is defined as   
\begin{equation}
    \cve{A}{B}(\rho_{AB}) = \ent(\rho_{AB}) - \ent(\rho_B),  
    \label{cve-def}
\end{equation}
where $\ent(\rho)$ is the von Neumann entropy (henceforth referred to as entropy where there is no ambiguity) of the state $\rho$, given by $\ent(\rho) = -\Tr(\rho \log \rho)$. All logarithms are taken to the base 2. Conditional entropy is used to define the information-theoretic quantities \emph{coherent information} \mbox{\cite{Wilde2013}}, which is just the negative of conditional entropy, and \emph{quantum discord} \mbox{\cite{Ollivier2001, Henderson2001, Radhakrishnan2020}}, which is a measure of non-classical correlations in a quantum state. While all negative conditional entropy states are entangled, not all entangled states possess negative conditional entropy. Negative conditional entropy states are important for the information-theoretic primitive quantum state merging \cite{Horodecki2005, Horodecki2006}, and provide quantum advantage in superdense coding \cite{Bennett1992, Bruss2004, Prabhu2013}. Negativity of conditional entropy is a necessary feature for the one-way distillation of entanglement \cite{Devetak2005}. It is important in maximizing the rates of distributed private randomness distillation \cite{Yang2019}, as well as in the reduction of uncertainty in predicting the outcomes of two incompatible measurements \cite{Berta2010}. Previously, the set of states that are not useful for any of the above tasks, that is, the set of states possessing non-negative conditional entropy (we refer to this set as $\cvenn AB$) was characterized as convex and compact \cite{Vempati2021, Friis2017}.\footnote{This is proven for the the case where $\dim A = \dim B = d$ \cite{Vempati2021}, but a similar argument can be used to extend this proof to the case where $\dim A \neq \dim B$.}The set of states for which conditional entropy remains non-negative even under the application of global unitary channels (referred to as $\acvenn AB$ in this work) was also similarly characterized \cite{Patro2017}.  

In this work, taking $\cvenn AB$ to be a set of free states, we ask: what is the class of free operations for $\cvenn AB$, and what are its properties? How does this class of operations relate to operations defined in the context of entanglement: separable channels and entanglement-breaking channels? How does one detect if a given channel belongs to the class of free operations for $\cvenn AB$? These questions are important in the context of preparation of quantum states for the aforementioned tasks, in order to determine if an operation one has access to is capable of generating resource. 

This paper is organized as follows: In Section \ref{sec:notation}, we introduce the notation, and recap the relevant preliminaries. In Section \ref{sec:freeops}, we define a class of free operations $\ncvep AB$, the largest class of conditional entropy non-decreasing channels. We prove that this class is exactly equal to $\cvenncompfree AB$, the largest class of completely free channels for $\cvenn AB$, and is elegantly characterized by a property we call $A$-unitality. We discuss properties of $A$-unital channels in Section \ref{sec:props}. In Section \ref{sec:aunital_sep}, we discuss the relation between $A$-unital channels, separable channels and entanglement-breaking channels. While the set of separable states and $\cvenn AB$ share a subset-superset relationship, we show that surprisingly, this relationship does not hold for $A$-unital channels and separable channels---there exist $A$-unital channels that are not separable, separable channels that are not $A$-unital, as well as channels in the intersection. In Section \ref{sec:unital}, we extend the class $\acvenn AB$, previously characterized only for the $2 \otimes 2$ case to the $d \otimes d$ case, and prove that unital channels are the largest class of completely free channels for $\acvenn AB$. Since both entropy and conditional entropy are heavily discussed in this manuscript, in Section \ref{sec:bounds}, we provide upper and lower bounds for the conditional entropy of a state for a given entropy. We conclude in Section \ref{sec:conclusion} with some concrete open questions that emerge from our work.

\section{Preliminaries}
\label{sec:notation}

In this section, we review relevant definitions and concepts from literature. $\Hil^A$ denotes a finite dimensional Hilbert space for physical system $A$. The dimension of the space $\Hil^A$ is given by $d_A$. The set of bounded linear operators on the Hilbert space $\Hil^A$ is denoted by $\B(A)$. The set of bounded linear operators over Hilbert space $\Hil^A \otimes \Hil^B$ is $\B(AB)$. The set of bounded linear operators that are Hermitian, positive semi-definite and have unit trace, that is, the density matrices (or quantum states) over $\Hil^A$ is denoted by $\St(A)$. A channel $\N: \St(A) \rightarrow \St(B)$ is a linear map from $\St(A)$ to $\St(B)$ that is completely positive and trace preserving (CPTP). 

A quantum state $\rho_{AB}$ is called separable if it can be written as a convex combination of pure product states. That is, $\rho_{AB}$ is separable if $\rho_{AB} = \sum_i p_i |\alpha_i\rangle \langle \alpha_i|$
where $p_i \in [0, 1]$, $\sum_i p_i = 1$ and $|\alpha_i\rangle = |\phi_i\rangle_A \otimes |\psi_i\rangle_B$ for some $|\phi_i\rangle \in \Hil^A$ and $|\psi_i\rangle \in \Hil^B$. Separable states are the free states in the resource theory of entanglement. We denote the set of separable states in $\St(AB)$ by $\seps AB$. The set of states $\rho_{AB} \in \St(AB)$ for which $\cve AB(\rho_{AB}) \geq 0$, that is, the set of states possessing non-negative conditional entropy (Eq.~\mbox{\ref{cve-def}}) is denoted by $\cvenn AB$. 

Free and completely free channels for a given set of free states were defined in \cite{Chitambar2019}. We define these here for $\cvenn AB$. 

\begin{definition}
    \label{def:cvennfree}
    A channel $\N_{AB \rightarrow CD} : \St(AB) \rightarrow \St(CD)$ is free for $\cvenn AB$ if for all $ \rho \in \cvenn AB$, we have  $\N_{AB \rightarrow CD}(\rho) \in \cvenn CD$. The class of all such free channels mapping $\St(AB)$ to $\St(CD)$ is denoted by $\cvennfreed ABCD$. When the input and output spaces are the same, the class is denoted by $\cvennfree AB \equiv \cvennfreed ABAB$. 
\end{definition}

\begin{definition}
    \label{def:cvenncompfree}
    A channel $\N_{AB \rightarrow CD} : \St(AB) \rightarrow \St(CD)$ is completely free for $\cvenn AB$ if for all finite dimensional Hilbert spaces $\Hil^{A'}$  and $\Hil^{B'}$ and $\rho \in \cvenn {A'A}{B'B}$, we have  $\left( \id_{A'B'} \otimes \N_{AB \rightarrow CD} \right)(\rho) \in \cvenn {A'C}{B'D}$. The class of all such completely free channels mapping $\St(AB)$ to $\St(CD)$ is denoted by $\cvenncompfreed ABCD$. When the input and output spaces are the same, the class is denoted by $\cvenncompfree AB \equiv \cvenncompfreed ABAB$. 
\end{definition}

The free and completely free channels for $\seps AB$, denoted by $\sepsfreed ABCD$ and $\sepscompfreed ABCD$ respectively are defined analogously, replacing $\cvenn AB$ in the above definitions by $\seps AB$. 

Separable channels \cite{Cirac2001} form an important class of channels in the entanglement resource theory. We define these channels here:

\begin{definition}
    \label{def:sep}
    A channel $N: \St(AB) \rightarrow \St(CD)$ is called separable if and only if its action on a state $\rho \in \St(AB)$ can be expressed as 
    \begin{equation}
        \N(\rho) = \sum_{i=1}^n (K^i_A \otimes L^i_B) \rho (K^i_A \otimes L^i_B)^\dagger
    \end{equation}
    for some integer $n \geq 1$ and linear maps $K^i_A:\Hil^A \rightarrow \Hil^C$ and $L^i_B:\Hil^B \rightarrow \Hil^D$ such that $\sum_{i=1}^n (K^i_A \otimes L^i_B)^\dagger (K^i_A \otimes L^i_B) = I_{AB}$. The class of all separable channels mapping $\St(AB)$ to $\St(CD)$ is denoted by $\seppd ABCD$. When the input and output spaces are the same, the class is denoted by $\sepp AB \equiv \seppd ABAB$.
\end{definition}
It turns out that $\sepp AB$ is exactly equal to $\sepscompfree AB$ \cite{Chitambar2019}. That is, the class of separable channels form the largest class of completely free channels for $\seps AB$. 

Entanglement-breaking channels constitute another relevant class of channels \cite{Wilde2013}:

\begin{definition}
    \label{def:eb}
    A channel $\N: \St(A) \rightarrow \St(B)$ is called entanglement-breaking if the quantum state $(\id_R \otimes \N) (\rho_{RA})$ is separable for all $\rho_{RA}$ where $R$ is a reference system of arbitrary dimension. The class of all entanglement-breaking channels mapping $\St(A)$ to $\St(B)$ is denoted by $\ebpd AB$. When the input and output spaces are the same, the class is denoted by $\ebp A \equiv \ebpd AA$.
\end{definition}

Unital channels play an important role in the context of entropy, as the class of unital channels mapping $\St(A)$ to itself are exactly the ones that do not decrease the entropy of any state $\rho \in \St(A)$ \cite{Wilde2013}. We define them as

\begin{definition}
    \label{def:unital}
    A channel $\N: \St(A) \rightarrow \St(A)$ is called unital if $\N \left(\frac{I_A}{d_A}\right) = \frac{I_A}{d_A}$. The class of all unital channels mapping $\St(A)$ to $\St(A)$ is denoted by $\uni(A)$. 
\end{definition}

Having reviewed the definitions, we now characterize the class of completely free operations for $\cvenn AB$.

\section{Completely Free Operations for Conditional Entropy}
\label{sec:freeops}

Since the negativity of conditional entropy implies usefulness for various tasks, a natural candidate for a class of operations that are free for $\cvenn AB$ is the class of channels that does not decrease the conditional entropy of any state. Conditional entropy non-decreasing channels are defined as follows:

\begin{definition}
    \label{def:ncve}
    A channel $\N_{AB \rightarrow CD}: \St(AB) \rightarrow \St(CD)$ is said to be conditional von Neumann entropy (CVE) non-decreasing, if for all $\rho \in \St(AB)$, we have $\cve CD \left( \N_{AB \rightarrow CD}(\rho) \right) \geq \cve AB(\rho)$. The class of CVE non-decreasing channels mapping $\St(AB)$ to $\St(CD)$ is denoted by $\ncvepd ABCD$. When the input and output spaces are the same, the class is denoted by $\ncvep AB \equiv \ncvepd ABAB$. 
\end{definition}

It turns out that this class of channels is exactly equal to $\cvenncompfree AB$ which was introduced in Def.~\ref{def:cvenncompfree}. We prove this equivalence by showing that both $\ncvep AB$ and $\cvenncompfree AB$ are in fact equal to a third class of operations, defined as follows:

\begin{definition}
    \label{def:auni}
    A channel $\N: \St(AB) \rightarrow \St(AB)$ is called A-unital if for all $\rho_B \in \St(B)$, $\N$ satisfies 
    \begin{equation}
        \N \left( \frac{I_A}{d_A} \otimes \rho_B \right) = \frac{I_A}{d_A} \otimes \rho'_B
    \end{equation}
    for some $\rho'_B \in \St(B)$. We denote the set of A-unital channels on $\St(AB)$ as $\unip AB$. Likewise, the channel $\N$ is called $B$-unital if for all $\rho_A \in \St(A)$, $\N$ satisfies
     \begin{equation}
        \N \left( \rho_A \otimes \frac{I_B}{d_B}  \right) = \rho'_A \otimes \frac{I_B}{d_B}
    \end{equation}
    for some $\rho'_A \in \St(A)$, and the set of $B$-unital channels on $\St(AB)$ is denoted as $\unip BA$.
\end{definition}

Not only does the class of $A$-unital channels serve to show that both $\ncvep AB$ and $\cvenncompfree AB$ are equal, but its definition also makes this class amenable to detection, as discussed in Section \ref{subsec:detection}. We now prove the equivalence of these three classes. 

\begin{theorem}
    \label{thm:main_theorem}
    The classes $\unip AB$, $\ncvep AB$ and $\cvenncompfree AB$ are all equal to each other.
\end{theorem}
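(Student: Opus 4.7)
My plan is to establish the cyclic chain
$\unip AB \subseteq \cvenncompfree AB \subseteq \ncvep AB \subseteq \unip AB$,
which collapses to the claimed triple equality.

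For $\unip AB \subseteq \cvenncompfree AB$, I would proceed in two substeps. First I would prove a \emph{lifting lemma}: if $\N \in \unip AB$, then $\id_{A'B'} \otimes \N \in \unip{A'A}{B'B}$ for every choice of ancilla spaces $A'$, $B'$. This is a direct computation: expand an arbitrary $\tau_{B'B}$ in a linear operator basis on $B'B$, push the $I_{A'}/d_{A'}$ and $I_A/d_A$ factors through $\id_{A'B'} \otimes \N$, and invoke the linear extension of the $A$-unital property on each term to pull $I_A/d_A$ back out; the output factorizes as $I_{A'A}/(d_{A'}d_A) \otimes \tau'_{B'B}$ for some $\tau'_{B'B} \in \St(B'B)$. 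Second, I would show that any $A$-unital channel is conditional entropy non-decreasing. Using the identity $\cve AB(\rho) = \log d_A - \re{\rho_{AB}}{I_A/d_A \otimes \rho_B}$ together with the data processing inequality, I get $\re{\rho_{AB}}{I_A/d_A \otimes \rho_B} \geq \re{\N(\rho_{AB})}{I_A/d_A \otimes \rho'_B}$, where $\rho'_B$ comes from the $A$-unital action of $\N$ on $I_A/d_A \otimes \rho_B$. The right-hand side upper-bounds the minimum $\min_{\tau_B}\re{\N(\rho_{AB})}{I_A/d_A \otimes \tau_B} = \log d_A - \cve AB(\N(\rho))$, hence $\cve AB(\N(\rho)) \geq \cve AB(\rho)$. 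Combining these two substeps, $\id_{A'B'} \otimes \N$ is conditional entropy non-decreasing on $\St(A'AB'B)$, and therefore maps $\cvenn{A'A}{B'B}$ into itself, so $\N \in \cvenncompfree AB$.

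For $\cvenncompfree AB \subseteq \ncvep AB$, I would fix $\N \in \cvenncompfree AB$ and $\rho \in \St(AB)$, writing $s = \cve AB(\rho)$. I would construct an auxiliary $\sigma \in \St(A'B')$ with $d_{A'} = d_{B'} = d_A$ satisfying $\cve{A'}{B'}(\sigma) = -s$. Since $s \in [-\log d_A, \log d_A]$ always, such $\sigma$ exists: for $s \geq 0$ take a pure entangled state $|\phi\rangle_{A'B'}$ with Schmidt entropy $s$, and for $s \leq 0$ take a product $\tau_{A'} \otimes |0\rangle\langle 0|_{B'}$ with $\ent(\tau) = -s$. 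Additivity of conditional entropy under tensor products then gives $\cve{A'A}{B'B}(\sigma \otimes \rho) = 0$, placing $\sigma \otimes \rho$ in $\cvenn{A'A}{B'B}$. Complete freeness of $\N$ then forces $\sigma \otimes \N(\rho) = (\id_{A'B'} \otimes \N)(\sigma \otimes \rho) \in \cvenn{A'A}{B'B}$, and additivity extracts $\cve AB(\N(\rho)) \geq s$, as required.

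For $\ncvep AB \subseteq \unip AB$, I would feed $\N$ the state $I_A/d_A \otimes \rho_B$ for an arbitrary $\rho_B \in \St(B)$. This state saturates the upper bound $\cve AB \leq \log d_A$. Since $\N$ is conditional entropy non-decreasing, $\cve AB(\N(I_A/d_A \otimes \rho_B)) \geq \log d_A$, which forces equality. The equality condition in the non-negativity of $\re{\sigma_{AB}}{I_A/d_A \otimes \sigma_B}$ then forces the output to be of the form $I_A/d_A \otimes \rho'_B$ for some $\rho'_B \in \St(B)$, which is precisely $A$-unitality. The main obstacle I anticipate is the first inclusion: the lifting lemma needs the operator-basis expansion handled carefully, and the data processing step is delicate because the $B$-marginal of $\N(\rho_{AB})$ in general differs from the $\rho'_B$ produced by the $A$-unital property on $I_A/d_A \otimes \rho_B$; the variational formula for conditional entropy is what bridges this mismatch.
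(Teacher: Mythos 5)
Your proposal is correct and follows essentially the same route as the paper: the same lifting lemma for $\id_{A'B'}\otimes\N$, the same relative-entropy variational formula plus data processing for $\unip AB \Rightarrow \ncvep AB$, and the same maximality/equality argument at $\cve AB = \log d_A$ for $\ncvep AB \Rightarrow \unip AB$. The only cosmetic difference is that you prove $\cvenncompfree AB \subseteq \ncvep AB$ directly by choosing $\sigma$ with $\cve{A'}{B'}(\sigma)$ exactly $-\cve AB(\rho)$, whereas the paper argues by contrapositive with $\cve{A'}{B'}(\sigma)$ in a half-open interval; both rest on the same attainability-of-conditional-entropy-values fact.
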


\begin{figure}
    \includegraphics[width=\linewidth]{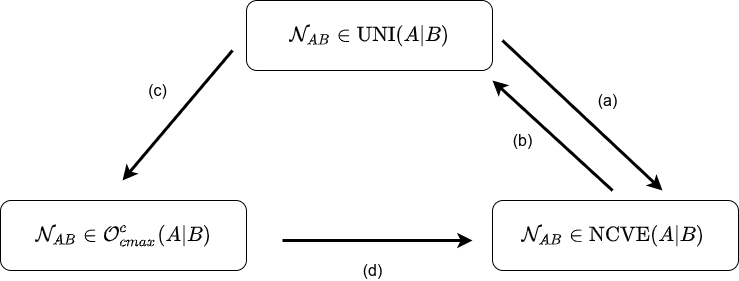}
    \caption{Equivalence of $\unip AB, \ncvep AB$ and $\cvenncompfree AB$.}
    \label{fig:a_uni_proof}
\end{figure}

\begin{proof}
    The structure of the proof is depicted in Fig.~\ref{fig:a_uni_proof}. Let $\N:\St(AB) \rightarrow \St(AB)$ be a quantum channel. In Lemma \ref{lem:auni_ncve}, we show that $\N \in \unip AB \implies \N \in \ncvep AB$, labelled as (a). In Lemma \ref{lem:ncve_auni}, we show that $\N \in \ncvep AB \implies \N \in \unip AB$, which is labelled as (b). In Lemma \ref{lem:auni_aauni}, we show that $\N \in \unip AB \implies (\id_{A'B'} \otimes \N) \in \unip {A'A}{B'B}$. We then use Lemma \ref{lem:auni_ncve} and Lemma \ref{lem:auni_aauni} to prove Lemma \ref{lem:auni_ocmax}, which is $\N \in \unip AB \implies \N \in \cvenncompfree AB$, labelled as (c) in the figure. Finally, we prove Lemma \ref{lem:ocmax_ncve}, which is $\N \in \cvenncompfree AB \implies \N \in \ncvep AB$, labelled as (d) in the figure. These lemmas establish the equivalence between the three classes.

\end{proof}

\begin{lemma}
    \label{lem:auni_ncve}
    $\N \in \unip AB \implies \N \in \ncvep AB$.
\end{lemma}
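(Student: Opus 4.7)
The plan is to rewrite the conditional entropy as (a constant minus) a quantum relative entropy with a carefully chosen product reference state, and then invoke the data processing inequality (DPI) with that reference state, whose form is preserved by A-unitality.

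First I would use the identity
\begin{equation*}
    \cve{A}{B}(\rho_{AB}) = \log d_A - \re{\rho_{AB}}{\frac{I_A}{d_A} \otimes \rho_B},
\end{equation*}
where $\rho_B = \Tr_A \rho_{AB}$. This follows by expanding $-\Tr(\rho_{AB} \log(I_A/d_A \otimes \rho_B))$ using $\log(I_A/d_A \otimes \rho_B) = -\log d_A \cdot I_{AB} + I_A \otimes \log \rho_B$ and the marginal relation $\Tr_A \rho_{AB} = \rho_B$.

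Next, fix an arbitrary $\rho_{AB} \in \St(AB)$ and set $\sigma_{AB} = \N(\rho_{AB})$ with $\sigma_B = \Tr_A \sigma_{AB}$. I would apply the DPI to the channel $\N$ with the pair $(\rho_{AB}, \, \frac{I_A}{d_A} \otimes \rho_B)$. A-unitality of $\N$ guarantees that $\N(\frac{I_A}{d_A} \otimes \rho_B) = \frac{I_A}{d_A} \otimes \rho'_B$ for some $\rho'_B \in \St(B)$, so the DPI reads
\begin{equation*}
    \re{\rho_{AB}}{\frac{I_A}{d_A} \otimes \rho_B} \geq \re{\sigma_{AB}}{\frac{I_A}{d_A} \otimes \rho'_B}.
\end{equation*}
Crucially, A-unitality is exactly the property needed to ensure the post-channel reference state still has the product structure required by the identity above.

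Finally, I would re-apply the same expansion to the right-hand side. A short calculation, which is the routine step mentioned before rearranged slightly, yields
\begin{equation*}
    \re{\sigma_{AB}}{\frac{I_A}{d_A} \otimes \rho'_B} = \log d_A - \cve{A}{B}(\sigma_{AB}) + \re{\sigma_B}{\rho'_B}.
\end{equation*}
Substituting into the DPI and cancelling $\log d_A$ gives $\cve{A}{B}(\N(\rho_{AB})) \geq \cve{A}{B}(\rho_{AB}) + \re{\sigma_B}{\rho'_B}$, and non-negativity of the relative entropy term closes the argument. The only subtlety I would flag is that A-unitality does not force $\rho'_B$ to equal the actual output marginal $\sigma_B$; but this mismatch is harmless, since it appears precisely as the non-negative Kullback--Leibler term $\re{\sigma_B}{\rho'_B}$ that only strengthens the inequality.
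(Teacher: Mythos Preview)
Your proof is correct and follows essentially the same route as the paper: both use the identity $\cve{A}{B}(\rho_{AB}) = \log d_A - \re{\rho_{AB}}{\frac{I_A}{d_A}\otimes\rho_B}$, apply the data processing inequality with the reference state $\frac{I_A}{d_A}\otimes\rho_B$, and exploit $A$-unitality to keep the output reference in product form. The only cosmetic difference is that the paper handles the mismatch between $\rho'_B$ and the true output marginal $\sigma_B$ by invoking the variational formula $\cve{A}{B}(\sigma_{AB}) = \max_{\tau_B}\bigl(\log d_A - \re{\sigma_{AB}}{\frac{I_A}{d_A}\otimes\tau_B}\bigr)$, whereas you make the same point by explicitly isolating the non-negative slack $\re{\sigma_B}{\rho'_B}$; these are two phrasings of the same inequality.
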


\begin{proof}
    Let $\N \in \unip AB$. Thus, for all $\rho_B \in \St(B)$, we have $\N \left(\frac{I_A}{d_A} \otimes \rho_B\right) = \frac{I_A}{d_A} \otimes \rho'_B$. To prove the Lemma, we require two results related to quantum relative entropy. The first result is a characterization of the conditional entropy in terms of relative entropy \cite{Nielsen2000, Wilde2013}:  
    \begin{align}
    \begin{split}
        \label{eq:cve_rel_entropy}
        \cve AB (\rho_{AB}) &= \log d -\re{\rho_{AB}} {\frac{I_A}{d_A} \otimes \rho_B} \\
                            &= \underset{\sigma_B \in \St(B)}{\max} \log d -\re {\rho_{AB}} {\frac{I_A}{d_A} \otimes \sigma_B}.  
    \end{split}
    \end{align}
    Next, we require the monotonicity of relative entropy, which is
    \begin{align}
        \label{eq:mono_rel_entropy}
        \re {\N(\rho)} {N(\sigma)} \leq \re \rho \sigma.
    \end{align}
    From Eq.~\ref{eq:cve_rel_entropy}, for all $\rho \in \St(AB)$ and $\sigma_B \in \St(B)$, we have 
    \begin{equation}
        \cve AB \left(\N(\rho)\right) \geq \log d -\re {\N(\rho)} {\frac{I_A}{d_A} \otimes \sigma_B}.
    \end{equation}
    In particular, we must have 
    \begin{equation}
        \cve AB \left(\N(\rho)\right) \geq \log d-\re {\N(\rho)} {\frac{I_A}{d_A} \otimes \rho'_B},
    \end{equation}where $\N \left(\frac{I_A}{d_A} \otimes \rho_B \right) = \frac{I_A}{d_A} \otimes \rho'_B$. 
    Therefore, 
    \begin{equation}
        \cve AB \left(\N(\rho)\right) \geq \log d-\re {\N(\rho)}{\N \left(\frac{I_A}{d_A} \otimes \rho_B\right)}.
    \end{equation}
    From Eq.~\ref{eq:mono_rel_entropy}, we have 
    \begin{equation}
        \log d -\re {\N(\rho)} {\N \left(\frac{I_A}{d_A} \otimes \rho_B \right)} \geq \log d -\re {\rho} {\frac{I_A}{d_A} \otimes \rho_B} = \cve AB (\rho).
    \end{equation} 
    Thus, for all $\rho \in \St(AB)$, we have $\cve AB \left(\N(\rho)\right) \geq \cve AB (\rho)$, which implies that $\N \in \ncvep AB$.

\end{proof}

\begin{lemma}
    \label{lem:ncve_auni}
    $\N \in \ncvep AB \implies \N \in \unip AB$.
\end{lemma}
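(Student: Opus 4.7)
The plan is to exploit the fact that the product states $\frac{I_A}{d_A} \otimes \rho_B$ are precisely the states of $\St(AB)$ attaining the maximum possible conditional entropy $\log d_A$, so that the CVE non-decreasing property forces $\N$ to preserve this maximizing set. Concretely, using Eq.~\ref{eq:cve_rel_entropy} together with non-negativity of relative entropy, one has the universal upper bound $\cve AB(\rho) \leq \log d_A$ for every $\rho \in \St(AB)$, with equality iff $\rho = \frac{I_A}{d_A} \otimes \Tr_A \rho$ (by faithfulness of relative entropy). A direct substitution into Eq.~\ref{eq:cve_rel_entropy} also yields $\cve AB\!\left(\frac{I_A}{d_A} \otimes \rho_B\right) = \log d_A$ for every $\rho_B \in \St(B)$.

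Next I would fix $\N \in \ncvep AB$ and an arbitrary $\rho_B \in \St(B)$, and set $\sigma = \N\!\left(\frac{I_A}{d_A} \otimes \rho_B\right)$ with marginal $\sigma_B = \Tr_A \sigma$. The non-decreasing assumption combined with the bound above yields
\begin{equation*}
    \log d_A = \cve AB\!\left(\frac{I_A}{d_A} \otimes \rho_B\right) \leq \cve AB(\sigma) \leq \log d_A,
\end{equation*}
so all inequalities are equalities and $\cve AB(\sigma) = \log d_A$. Rewriting this equality via Eq.~\ref{eq:cve_rel_entropy} gives $\re{\sigma}{\frac{I_A}{d_A} \otimes \sigma_B} = 0$.

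Finally, faithfulness of quantum relative entropy gives $\sigma = \frac{I_A}{d_A} \otimes \sigma_B$, which is exactly the A-unital defining identity with $\rho'_B = \sigma_B$; since $\rho_B$ was arbitrary, $\N \in \unip AB$. The whole argument is essentially a one-line reduction once one adopts the maximum-value viewpoint, and I do not anticipate any genuine obstacle: the only nontrivial ingredient is the equality case of non-negativity of quantum relative entropy, which is standard.
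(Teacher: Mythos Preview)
Your proposal is correct and follows essentially the same approach as the paper: both arguments hinge on the fact that $\cve AB$ attains its maximum $\log d_A$ precisely on states of the form $\frac{I_A}{d_A}\otimes\sigma_B$, so a CVE non-decreasing channel must preserve this maximizing set. The paper phrases this as a short contradiction argument and cites the equality case from \cite{Wilde2013}, whereas you give a direct proof and spell out the equality case via Eq.~\ref{eq:cve_rel_entropy} and faithfulness of relative entropy; these are cosmetic differences, not a genuinely different route.
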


\begin{proof}
    For any $\rho_{AB} \in \St(AB)$, the maximum value of $\cve AB (\rho_{AB})$ is $\log d_A$ and is attained if and only if $\rho_{AB} = \frac{I_A}{d_A} \otimes \sigma_B$ for some $\sigma_B \in \St(B)$ \cite{Wilde2013}. 

    \sloppy We use the above fact to prove the lemma by contradiction. Let $\N \in \ncvep AB$. Assume that $\N \notin \unip AB$. Then, there exists some $\sigma_B \in \St(B)$ such that $\N \left(\frac{I_A}{d_A} \otimes \sigma_B \right) \neq \frac{I_A}{d_A} \otimes \sigma_B'$ for all $\sigma_B' \in \St(B)$. This implies that $\cve AB \left(\frac{I_A}{d_A} \otimes \sigma_B \right) = \log d_A$, but $\cve AB \left(\N \left(\frac{I_A}{d_A} \otimes \sigma_B \right)\right) < \log d_A$. However, this is a contradiction since $\N \in \ncvep AB$. Therefore, the assumption that $\N \notin \unip AB$ must be false, and we have $\N \in \ncvep AB \implies \N \in \unip AB$. 
\end{proof}

\begin{lemma}
    \label{lem:auni_aauni}
$\N \in \unip AB \implies (\id_{A'B'} \otimes \N) \in \unip {A'A}{B'B}$.
\end{lemma}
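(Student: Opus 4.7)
The plan is to verify A-unitality of $\id_{A'B'} \otimes \N$ directly from the definition. Specifically, for any $\sigma_{B'B} \in \St(B'B)$ we must exhibit $\sigma'_{B'B} \in \St(B'B)$ such that
$$(\id_{A'B'} \otimes \N)\!\left(\frac{I_{A'A}}{d_{A'}d_A} \otimes \sigma_{B'B}\right) = \frac{I_{A'A}}{d_{A'}d_A} \otimes \sigma'_{B'B}.$$
The natural strategy is to reduce everything to the one-system A-unitality hypothesis on $\N$ by writing $\sigma_{B'B}$ in a product operator basis and using linearity.

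First I would extend A-unitality of $\N$ from states to arbitrary Hermitian operators $H_B \in \B(B)$. Via the Jordan decomposition $H_B = \alpha \tau_B - \beta \omega_B$ with $\tau_B, \omega_B \in \St(B)$ and $\alpha, \beta \geq 0$, linearity of $\N$ gives $\N\!\left(\frac{I_A}{d_A} \otimes H_B\right) = \frac{I_A}{d_A} \otimes H'_B$ for some Hermitian $H'_B \in \B(B)$ depending linearly on $H_B$. This step is trivial but essential, since the next step forces us to feed non-positive operators into $\N$.

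Next I would decompose $\sigma_{B'B} = \sum_k E_k \otimes F_k$ with $E_k \in \B(B')$ and $F_k \in \B(B)$ Hermitian; such a decomposition always exists because the Hermitian part of $\B(B'B)$ is spanned by tensor products of Hermitian operators on the factors. After reordering the tensor factors from the $A'AB'B$ system to the $A'B'AB$ system, the input state becomes $\sum_k \frac{I_{A'}}{d_{A'}} \otimes E_k \otimes \frac{I_A}{d_A} \otimes F_k$, so applying $\id_{A'B'} \otimes \N$ and invoking the extended A-unitality yields $\sum_k \frac{I_{A'}}{d_{A'}} \otimes E_k \otimes \frac{I_A}{d_A} \otimes F'_k$. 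Reordering back to the $A'AB'B$ system gives $\frac{I_{A'A}}{d_{A'}d_A} \otimes \sum_k E_k \otimes F'_k$, so setting $\sigma'_{B'B} := \sum_k E_k \otimes F'_k$ completes the identification. That $\sigma'_{B'B}$ is a bona fide state follows automatically from the fact that the left-hand side is the image of a state under a CPTP map and must have unit trace and be positive semidefinite.

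I do not anticipate any deep obstacle. The only mild subtleties are (i) extending A-unitality to non-positive Hermitian operators, which is immediate from the Jordan decomposition, and (ii) correctly tracking the tensor factor reordering between the $A'AB'B$ and $A'B'AB$ orderings, which is pure bookkeeping.
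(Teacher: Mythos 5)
Your proposal is correct and follows essentially the same route as the paper: decompose the operator on $B'B$ into a sum of product terms and use linearity of the channel to reduce to the single-system hypothesis. The only cosmetic difference is that the paper sidesteps your Jordan-decomposition step by expanding $\sigma_{B'B}$ in a basis of $\St(B'B)$ whose elements are themselves product density matrices, so the A-unitality hypothesis applies to each basis element directly and only the (possibly negative) expansion coefficients require linearity.
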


\begin{proof}
    Let $\N_{AB} \in \unip AB$. Then, for all $\rho_B \in \St(B)$, we have $\N \left(I_A \otimes \rho_B \right) = I_A \otimes \rho'_B$ for some $\rho'_B \in \St(B)$. It is always possible to find a basis for any $\St(X)$, in which each basis element is a density matrix, that is, each basis element also belongs to $\St(X)$ \cite{Wilde2013}. Let $\{\beta'_i\}_i$ be such a basis for $\St(B')$ and $\{\beta_j\}_j$ be such a  basis for $\St(B)$. Thus, $\{\beta'_i \otimes \beta_j\}_{ij}$ is a basis of density matrices for $\St(B'B)$. We have,
    \begin{align}
        \begin{split}
            (\id_{A'B'} & \otimes \N_{AB}) \left(\frac{I_{A'A}}{d_{A'A}} \otimes \left(\beta'_i \otimes \beta_j \right)_{B'B}\right)  \\   
                        &= (\id_{A'B'} \otimes \N_{AB}) \left(\frac{I_{A'}}{d_{A'}} \otimes \frac{I_A}{d_A} \otimes \beta'_{iB'} \otimes \beta_{jB} \right) \\  
                        &= \frac{I_{A'}}{d_{A'}} \otimes \beta'_{iB'} \otimes \N_{AB} \left( \frac{I_A}{d_A} \otimes \beta_{jB} \right) \\  
                        &= \frac{I_{A'}}{d_{A'}} \otimes \beta'_{iB'} \otimes \frac{I_A}{d_A} \otimes \gamma_{jB} \\  
                        &= \frac{I_{A'A}}{d_{A'A}} \otimes \left(\beta'_{i} \otimes \gamma_j\right)_{B'B}.  
        \end{split}
    \end{align}

    Now, for all $\sigma \in \St(B'B)$, we have  
    \begin{align}
        \begin{split}
            (\id_{A'B'} & \otimes \N_{AB}) \left(\frac{I_{A'A}}{d_{A'A}} \otimes \sigma_{B'B}\right)  \\   
                        &=(\id_{A'B'}  \otimes \N_{AB}) \left(\frac{I_{A'A}}{d_{A'A}} \otimes \left(\sum_{ij}  b_{ij}\beta'_i \otimes \beta_j\right)_{B'B} \right)  \\   
                        &=\sum_{ij} b_{ij} \, (\id_{A'B'}  \otimes \N_{AB}) \left(\frac{I_{A'A}}{d_{A'A}} \otimes \left(\beta'_i \otimes \beta_j\right)_{B'B} \right)  \\   
                        &= \sum_{ij} b_{ij} \, \frac{I_{A'A}}{d_{A'A}} \otimes \left(\beta'_{i} \otimes \gamma_j\right)_{B'B}\\ 
                        &= \frac{I_{A'A}}{d_{A'A}} \otimes \sum_{ij} b_{ij} \left(\beta'_{i} \otimes \gamma_j\right)_{B'B}.\\ 
        \end{split}
    \end{align}

    Thus, $\id_{A'B'} \otimes \N_{AB} \in \unip {A'A}{B'B}$.
\end{proof}

\begin{lemma}
    \label{lem:auni_ocmax}
    $\N \in \unip AB \implies \N \in \cvenncompfree AB$.
\end{lemma}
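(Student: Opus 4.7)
The plan is to combine Lemma~\ref{lem:auni_ncve} and Lemma~\ref{lem:auni_aauni} in a single short argument. The key observation is that $A$-unitality is preserved under tensoring with the identity channel on an arbitrary reference system (Lemma~\ref{lem:auni_aauni}), which in turn lets us invoke the CVE non-decreasing property (Lemma~\ref{lem:auni_ncve}) directly at the level of the extended system, rather than only at the level of $AB$.

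First I would fix an arbitrary $\N \in \unip AB$ and arbitrary reference Hilbert spaces $\Hil^{A'}, \Hil^{B'}$. By Lemma~\ref{lem:auni_aauni}, the extended channel $\id_{A'B'} \otimes \N$ belongs to $\unip{A'A}{B'B}$. Applying Lemma~\ref{lem:auni_ncve} to this extended channel with respect to the bipartition $A'A \mid B'B$, I would conclude that $\id_{A'B'} \otimes \N \in \ncvep{A'A}{B'B}$. Spelled out, this means that for every $\rho \in \St(A'AB'B)$,
\begin{equation*}
    \cve{A'A}{B'B}\bigl((\id_{A'B'} \otimes \N)(\rho)\bigr) \;\geq\; \cve{A'A}{B'B}(\rho).
\end{equation*}

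To finish, I would restrict attention to $\rho \in \cvenn{A'A}{B'B}$. Such $\rho$ satisfy $\cve{A'A}{B'B}(\rho) \geq 0$ by definition, and the displayed inequality then gives $\cve{A'A}{B'B}\bigl((\id_{A'B'} \otimes \N)(\rho)\bigr) \geq 0$, i.e., $(\id_{A'B'} \otimes \N)(\rho) \in \cvenn{A'A}{B'B}$. Since $\Hil^{A'}$ and $\Hil^{B'}$ were arbitrary, Definition~\ref{def:cvenncompfree} is satisfied and $\N \in \cvenncompfree AB$.

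I do not anticipate any real obstacle here; the heavy lifting was done in the two preceding lemmas. The only point requiring care is to apply Lemma~\ref{lem:auni_ncve} at the level of the composite bipartition $A'A \mid B'B$ rather than the original $A \mid B$, which is precisely the reason Lemma~\ref{lem:auni_aauni} is set up as an intermediate step.
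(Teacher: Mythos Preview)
Your proposal is correct and follows essentially the same route as the paper's proof: apply Lemma~\ref{lem:auni_aauni} to lift $A$-unitality to the extended system, invoke Lemma~\ref{lem:auni_ncve} at the $A'A\mid B'B$ bipartition to obtain the CVE non-decreasing property there, and then read off complete freeness from Definition~\ref{def:cvenncompfree}. Your remark about applying Lemma~\ref{lem:auni_ncve} at the composite bipartition is exactly the point the paper's argument hinges on as well.
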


\begin{proof}
    \sloppy Let $\N_{AB} \in \unip AB$. From Lemma \ref{lem:auni_aauni}, we have $\id_{A'B'} \otimes \N_{AB} \in \unip {A'A}{B'B}$. Hence, using Lemma \ref{lem:auni_ncve}, we have $\id_{A'B'} \otimes \N_{AB} \in \ncvep {A'A}{B'B}$. Thus, for all $\rho \in \cvenn {A'A}{B'B}$, we have $\cve AB \left(\left(\id_{A'B'} \otimes N_{AB}\right) (\rho) \right) \geq \cve AB (\rho) \geq 0$, which implies $\left(\id_{A'B'} \otimes \N_{AB}\right)(\rho) \in \cvenn {A'A}{B'B}$. Now, using Definition \ref{def:cvenncompfree}, we have $\N_{AB} \in \cvenncompfree AB$. 
\end{proof}

\begin{lemma}
    \label{lem:ocmax_ncve}
    $\N \in \cvenncompfree AB \implies \N \in \ncvep AB$.
\end{lemma}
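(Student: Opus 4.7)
The plan is to exploit complete freeness by tensoring $\rho$ with an auxiliary state whose conditional entropy exactly cancels that of $\rho$, and then to read off the conclusion from additivity. Fix $\N \in \cvenncompfree AB$ and an arbitrary $\rho_{AB} \in \St(AB)$, and set $c := \cve AB(\rho_{AB})$. I will construct a state $\tau_{A'B'}$ on suitable auxiliary spaces with $\cve{A'}{B'}(\tau) = -c$. Additivity of the conditional entropy on product states then gives $\cve{A'A}{B'B}(\tau \otimes \rho) = -c + c = 0$, so $\tau \otimes \rho \in \cvenn{A'A}{B'B}$. Invoking Definition \ref{def:cvenncompfree} yields $(\id_{A'B'} \otimes \N)(\tau \otimes \rho) = \tau \otimes \N(\rho) \in \cvenn{A'A}{B'B}$. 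Additivity once more gives $-c + \cve AB(\N(\rho)) \geq 0$, i.e.\ $\cve AB(\N(\rho)) \geq c$. Since $\rho$ was arbitrary, this shows $\N \in \ncvep AB$.

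The only nontrivial step is producing the auxiliary state $\tau$. I would take $\Hil^{A'} \cong \Hil^{B'} \cong \Hil^A$ and consider the continuous one-parameter family
\begin{equation*}
\tau_\lambda \;=\; \lambda \,|\Phi\rangle\langle\Phi|_{A'B'} \;+\; (1-\lambda)\,\frac{I_{A'B'}}{d_A^2},
\end{equation*}
where $|\Phi\rangle$ is a maximally entangled state on $A'B'$. At $\lambda = 0$ the state is maximally mixed, with $\cve{A'}{B'}(\tau_0) = \log d_A$; at $\lambda = 1$ it is pure and maximally entangled, with $\cve{A'}{B'}(\tau_1) = -\log d_A$. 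Since $c$ lies in $[-\log d_A, \log d_A]$ by the standard bounds on conditional entropy, continuity of von Neumann entropy together with the intermediate value theorem produces a $\lambda^\star \in [0,1]$ with $\cve{A'}{B'}(\tau_{\lambda^\star}) = -c$, which is the $\tau$ needed above.

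I do not anticipate a genuine obstacle: once the auxiliary state is in place, the argument is essentially a single additivity calculation wrapped around one invocation of complete freeness. The one subtlety worth flagging is the choice of auxiliary dimensions---taking $d_{B'} \geq d_A$ rather than a copy of the original $B$ system---so that the achievable range of $\cve{A'}{B'}$ covers the full interval $[-\log d_A, \log d_A]$ and in particular contains $-c$ for any admissible $c$.
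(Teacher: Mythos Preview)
Your proof is correct and follows essentially the same approach as the paper: tensor $\rho$ with an auxiliary state of complementary conditional entropy and invoke additivity together with complete freeness. The paper phrases the argument as a contrapositive and merely asserts the existence of the auxiliary state $\sigma$ with conditional entropy in a suitable interval, whereas you argue directly and explicitly construct $\tau$ via the isotropic family---but the underlying mechanism is identical.
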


\begin{proof}
    In order to prove the lemma, we prove the contrapositive statement, that is $\N_{AB} \notin \ncvep AB \implies \N_{AB} \notin \cvenncompfree AB$.
    Let $\N_{AB}:\St(AB) \rightarrow \St(AB)$ be a quantum channel that does not belong to $\ncvep AB$. Thus there exists $\rho \in \St(AB)$ such that $\cve AB \left( \N(\rho) \right) < \cve AB (\rho)$. 
    Consider Hilbert spaces $\Hil^{A'}$ and $\Hil^{B'}$, with $d_{A'} = d_A$ and $d_{B'} = d_B$. Now, we select a state $\sigma \in \St(A'B')$ such that   
    \begin{equation}
        \label{eq:ncve_ineq}
        - \cve AB (\rho) \leq \cve {A'}{B'} (\sigma) < - \cve AB \left(\N(\rho) \right).
    \end{equation}
    \sloppy Note that because $- \cve AB (\rho) < - \cve AB \left(\N(\rho)\right)$ and both these values lie in $[- \log d_A, \log d_A] = [- \log d_{A'}, \log d_{A'}]$, the state $\sigma$ always exists.  

        Now, consider the state $\sigma_{A'B'} \otimes \rho_{AB} \in \St(A'B'AB)$. Using Eq.~\ref{eq:ncve_ineq}, we have $\cve {A'A}{B'B} (\sigma_{A'B'} \otimes \rho_{AB}) = \cve {A'}{B'}(\sigma) + \cve AB (\rho) \geq - \cve AB (\rho) + \cve AB(\rho) = 0$. Thus, $\sigma_{A'B'} \otimes \rho_{AB} \in \cvenn{A'A}{B'B}$.

        We now apply the operation $\id_{A'B'} \otimes \N_{AB}$ on the state $\sigma_{A'B'} \otimes \rho_{AB}$. Once again, using Eq.~\ref{eq:ncve_ineq}, we have $\cve {A'A}{B'B} (\id_{A'B'} \otimes \N_{AB} (\sigma_{A'B'} \otimes \rho_{AB} )) = \cve {A'A}{B'B} (\sigma_{A'B'} \otimes \N_{AB}(\rho_{AB})) $ $= \cve {A'}{B'} (\sigma_{AB}) + \cve {A}{B} (\N_{AB}(\rho_{AB})) < - \cve AB (\N_{AB}(\rho_{AB})) + \cve AB (\N_{AB}(\rho_{AB})) = 0$. This implies that $\id_{A'B'} \otimes \N_{AB} (\sigma_{A'B'} \otimes \rho_{AB} ) \notin \cvenn {A'A}{B'B}$. Therefore, the channel $\id_{A'B'} \otimes \N_{AB}$ takes a state from inside $\cvenn {A'A}{B'B}$ to outside, and using Def.~\ref{def:cvenncompfree}, $\N_{AB} \notin \cvenncompfree AB$.
\end{proof}
Thus, Lemmas \ref{lem:auni_ncve}-\ref{lem:ocmax_ncve} prove all the implications depicted in Fig.~\ref{fig:a_uni_proof}, and Theorem \ref{thm:main_theorem} follows.

\section{Properties of A-Unital channels}
\label{sec:props}
\subsection{Serial and Parallel Concatenation}
Since we have $\unip AB = \cvenncompfree AB$, it follows that serial concatenation of two $A$-unital channels is $A$-unital, and parallel concatenation of an $A'$-unital channel and an $A$-unital channel is $A'A$-unital \cite{Chitambar2019}. However, this is also readily seen from the definition of $\unip AB$. Let $\M_{AB}, \N_{AB} \in \unip AB$. Then, we have 
\begin{equation}
    \M_{AB} \circ \N_{AB} \left ( \frac{I_A}{d_A} \otimes \rho_B \right)
    = \M_{AB} \left(\frac{I_A}{d_A} \otimes \sigma_B \right)
    = \frac{I_A}{d_A} \otimes \gamma_B. 
\end{equation}
Thus, $\M_{AB} \circ \N_{AB} \in \unip AB$. Similarly, if $\M_{A'B'} \in \unip {A'}{B'}$, and $\N_{AB} \in \unip AB$, from Lemma \ref{lem:auni_aauni}, we have $(\id_{A'B'} \otimes \N_{AB}), (\M_{A'B'} \otimes \id_{AB}) \in \unip {A'A}{B'B}$. Then, from the closedness under serial concatenation above, it follows that
\begin{equation}
    (\id_{A'B'} \otimes \N_{AB}) \circ (\M_{A'B'} \otimes \id_{AB}) = \M_{A'B'} \otimes \N_{AB} \in \unip {A'A}{B'B}.
\end{equation}

\subsection{Convexity}
\label{subsec:convex}
\begin{theorem}
    A convex combination of $A$-unital channels is an $A$-unital channel. 
\end{theorem}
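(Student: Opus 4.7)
The plan is to verify the defining property of $A$-unitality directly on a convex combination, exploiting the linearity built into the definition. First I would take channels $\N_1, \ldots, \N_n \in \unip{A}{B}$ and weights $p_1, \ldots, p_n \geq 0$ with $\sum_i p_i = 1$, and define $\N = \sum_i p_i \N_i$. Since the set of CPTP maps from $\St(AB)$ to itself is convex, $\N$ is itself a quantum channel, so the only remaining task is to check the $A$-unital condition from Definition \ref{def:auni}.

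Next I would fix an arbitrary $\rho_B \in \St(B)$ and expand
\begin{equation}
    \N\!\left(\tfrac{I_A}{d_A} \otimes \rho_B\right) = \sum_i p_i\, \N_i\!\left(\tfrac{I_A}{d_A} \otimes \rho_B\right).
\end{equation}
Because each $\N_i$ is $A$-unital, every summand equals $\tfrac{I_A}{d_A} \otimes \sigma_{B,i}$ for some $\sigma_{B,i} \in \St(B)$. Factoring out the common $\tfrac{I_A}{d_A}$ gives $\tfrac{I_A}{d_A} \otimes \sum_i p_i \sigma_{B,i}$, and convexity of $\St(B)$ guarantees that $\rho'_B := \sum_i p_i \sigma_{B,i} \in \St(B)$. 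Hence $\N$ satisfies the $A$-unital condition for every $\rho_B$, completing the proof.

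There is essentially no obstacle here: the argument is a one-line consequence of the linearity of quantum channels together with convexity of $\St(B)$. The only thing worth being explicit about is that $\rho'_B$ may depend on which $\rho_B$ was chosen at the start, but this is already allowed by Definition \ref{def:auni}.
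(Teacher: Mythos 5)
Your proof is correct and follows essentially the same route as the paper's: apply the convex combination to $\frac{I_A}{d_A}\otimes\rho_B$, use $A$-unitality of each summand, and factor out $\frac{I_A}{d_A}$ using convexity of $\St(B)$. The only (harmless) differences are that you treat $n$ channels rather than two and explicitly note that a convex combination of CPTP maps is CPTP.
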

\begin{proof}
    Let $\N_1, \N_2 \in \unip AB$. Then we have, 
    \begin{align}
        \begin{split}
            \left(p \N_1 + (1-p)\N_2 \right) & \left(\frac{I_A}{d_A} \otimes \beta_B \right) \\
                                             &= p \N_1 \left(\frac{I_A}{d_A} \otimes \beta_B \right) + (1-p) \N_2 \left(\frac{I_A}{d_A} \otimes \beta_B \right) \\
                                             &= p \left(\frac{I_A}{d_A} \otimes \gamma_{1B}\right) + (1-p)\left(\frac{I_A}{d_A} \otimes \gamma_{2B}\right) \\
                                             &= \frac{I_A}{d_A} \otimes \left(p \gamma_1 + (1-p) \gamma_2 \right)_B
        \end{split}
    \end{align}
    Thus, $p\N_1 + (1-p)\N_2 \in \unip AB$.
\end{proof}

\subsection{Detection}
\label{subsec:detection}
The characterization of both conditional entropy non-decreasing channels and completely free channels for $\cvenn AB$ as $A$-unital channels makes apparent how these channels can be detected. For any $\St(B)$, it is always possible to find $d_B^2$ density matrices $\{\beta_i\}_i$ from $\St(B)$ that form a basis for for $\St(B)$ \cite{Wilde2013}. For some unknown channel $\N$ given as a blackbox, a necessary and sufficient condition for $A$-unitality is $\N \left( \frac{I_A}{d_A} \otimes \beta_{iB} \right) = \frac{I_A}{d_B} \otimes \beta'_{iB}$ for all the $d_B^2$ states $\beta_i$ from the basis. This follows using the same proof technique as in Lemma \ref{lem:auni_aauni}.   

\section{Relationship between A-Unital Channels and Other Classes}
\label{sec:aunital_sep}

We now turn our attention to the relationship between $A$-unital channels and other classes. If we detect that a channel is $A$-unital (or not $A$-unital), can anything be said about the membership of the channel in $\sepp AB$? It turns out that this is not the case, as there exist $A$-unital channels that are not separable, separable channels that are not $A$-unital, as well as channels in the intersection. We discuss this in the following subsections. We also discuss the relationship between entanglement breaking channels and $A$-unital channels. These relationships are depicted in Fig.~\ref{fig:sep_auni}. Throughout this section, we assume $d_A = d_B = d.$ 

\begin{figure}
    \centering
    \includegraphics[width=0.5\textwidth]{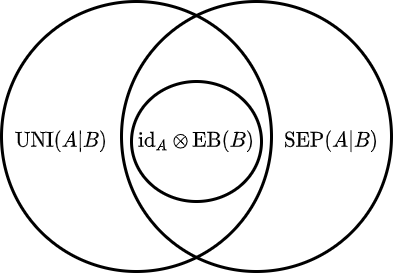}
    \caption{The relationship between $\unip AB$, $\sepp AB$ and $\id_A \otimes \ebp B$ when $d_A = d_B$.}
    \label{fig:sep_auni}
\end{figure}

\subsection{A-Unital Non-Separable Channels}

We prove the existence of channels that belong to $\unip AB$ but not $\sepp AB$ by constructing such a channel. Let $\{\ket i \}_i$ be an orthonormal basis for $\Hil^A$, and $\{\ket j\}_j$ be an orthonormal basis for $\Hil^B$.
\begin{definition}
    \label{def:swap_prepare}
    The Swap and Prepare channel, denoted by $\Sp_{AB}:\St(AB) \rightarrow \St(AB)$ is defined as the composition of the following channels:
    \begin{itemize}
        \item The Swap Channel $[Sw_{AB}: \St(AB) \rightarrow \St(AB)]$. The unitary channel that swaps systems $A$ and $B$. Its Kraus decomposition contains just one Kraus operator which is the unitary matrix $U_s \equiv \sum_{ij} \ket {ji} \bra{ij}$. We have \begin{equation} Sw_{AB}(\rho) = U_s \rho U_s^\dagger. \end{equation} 

        \item The Partial Trace Channel $[\Tr_A: \St(AB) \rightarrow \St(B)]$. The Kraus operators for this channel are $\left\{\bra i_A \otimes I_B \right\}_i$. We have  \begin{equation}\Tr_A(\rho) = \sum_i \left(\bra i_A \otimes I_B \right)(\rho_{AB}) \left(\ket i_A \otimes I_B \right). \end{equation}

        \item The Preparation of $\frac{I_A}{d}$ Channel $[Pr_A: \St(B) \rightarrow \St(AB)]$. The Kraus operators for this channel are $\left\{\frac{1}{\sqrt {d}} \ket i_A \otimes I_B\right\}_i$. We have \begin{align} Pr_A(\rho_B) &= \sum_i \frac{1}{d} \left(\ket i_A \otimes I_B \right) (\rho_B) \left(\bra i_A \otimes I_B \right) \\ 
            &= \frac{I_A}{d} \otimes \rho_B \end{align}
    \end{itemize}

    Finally, we define the Swap and Prepare Channel:
    \begin{equation}
        \Sp_{AB}(\rho_{AB}) = \left(Pr_A \circ \Tr_A \circ Sw_{AB}\right)(\rho_{AB}) 
    \end{equation}
\end{definition}

\begin{theorem}
    \label{thm:aunital_nonsep}
    There exist A-unital non-separable channels.
\end{theorem}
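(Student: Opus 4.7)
The plan is to argue that the explicitly constructed channel $\Sp_{AB}$ already witnesses the theorem: it is $A$-unital but not separable. Since $\Sp_{AB}$ is a composition of CPTP maps it is automatically CPTP, so the real content lies in verifying these two properties separately.

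First, to establish $\Sp_{AB} \in \unip{A}{B}$, I would simply push an arbitrary input of the form $\frac{I_A}{d}\otimes \rho_B$ through the three constituent maps. The swap $Sw_{AB}$ exchanges the two subsystems (which is well-defined as $d_A=d_B=d$), converting $\frac{I_A}{d}\otimes \rho_B$ into $\rho_A \otimes \frac{I_B}{d}$, where the $\rho$ now lives on system $A$. Tracing out $A$ then collapses this to $\frac{I_B}{d}$, independently of $\rho$, because $\Tr(\rho)=1$. Finally $Pr_A$ tensors in $\frac{I_A}{d}$, so the overall output is $\frac{I_A}{d}\otimes \frac{I_B}{d}$, which trivially has the form $\frac{I_A}{d}\otimes \rho'_B$ required by Definition~\ref{def:auni}. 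So $\Sp_{AB}$ is $A$-unital (in fact, it flattens the $B$-marginal to maximally mixed for every such input).

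Second, to show $\Sp_{AB}\notin \sepp{A}{B}$ I would appeal to the standard Choi--Jamio\l{}kowski characterization of separable channels: a channel $\N_{AB\to AB}$ is in $\sepp{A}{B}$ if and only if the state $(\id_{A'B'}\otimes \N)\bigl(|\Phi^+\rangle\langle\Phi^+|_{A'A}\otimes |\Phi^+\rangle\langle\Phi^+|_{B'B}\bigr)$ is separable across the bipartition $A'A \,|\, B'B$. This follows because Kraus operators of the form $K^i_A\otimes L^i_B$ preserve products across that cut. I then compute the image of $|\Phi^+\rangle_{A'A}\otimes |\Phi^+\rangle_{B'B}$ under $\id_{A'B'}\otimes \Sp_{AB}$ in three steps: the swap sends it to $|\Phi^+\rangle_{A'B}\otimes |\Phi^+\rangle_{B'A}$ by an elementary relabelling of the summation indices; tracing out $A$ destroys the second EPR pair to leave $|\Phi^+\rangle\langle\Phi^+|_{A'B}\otimes \frac{I_{B'}}{d}$; and appending $\frac{I_A}{d}$ yields $\frac{I_A}{d}\otimes |\Phi^+\rangle\langle\Phi^+|_{A'B}\otimes \frac{I_{B'}}{d}$. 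The $A'B$ marginal of this state is maximally entangled, which is incompatible with separability across $A'A \,|\, B'B$ since that cut places $A'$ on one side and $B$ on the other. This refutes separability of $\Sp_{AB}$.

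The only delicate point is the bookkeeping in the second step: keeping track of which reference subsystem ($A'$ or $B'$) ends up entangled with which physical subsystem ($A$ or $B$) after the swap, and then verifying that the surviving maximally entangled pair indeed straddles the $A'A \,|\, B'B$ cut rather than living inside one side. Invoking the Choi-state criterion cleanly sidesteps having to reason about all possible Kraus decompositions of $\Sp_{AB}$; everything else is routine composition of the three elementary maps in Definition~\ref{def:swap_prepare}.
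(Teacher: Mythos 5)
Your proposal is correct and follows essentially the same route as the paper: the $A$-unitality computation is identical, and the non-separability argument likewise works by showing that $\id_{A'B'}\otimes\Sp_{AB}$ maps a state that is product across the $A'A|B'B$ cut to one with a maximally entangled $A'B$ pair straddling that cut. The only cosmetic difference is that you feed in the full Choi input $\ketbra{\phi^+}{\phi^+}_{A'A}\otimes\ketbra{\phi^+}{\phi^+}_{B'B}$ and cite the Choi-state criterion for separable channels, whereas the paper uses $\ketbra{\phi^+}{\phi^+}_{A'A}\otimes\ketbra{00}{00}_{B'B}$ and the fact that $\sepp AB=\sepscompfree AB$; both rest on the same one-directional observation that separable channels cannot create entanglement across the cut.
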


\begin{proof}
    In Lemma \ref{lem:ex_auni}, we show that the channel $\Sp_{AB} \in \unip AB$, and in Lemma \ref{lem:ex_nonsep}, we show that $\Sp_{AB} \notin \sepp AB$.  
\end{proof}

\begin{lemma}
    \label{lem:ex_auni}
    $\Sp_{AB} \in \unip AB$.
\end{lemma}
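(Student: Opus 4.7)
The plan is to unfold the three-step composition $\Sp_{AB} = Pr_A \circ \Tr_A \circ Sw_{AB}$ applied to an input of the form $\frac{I_A}{d} \otimes \rho_B$, and verify that the output has the form $\frac{I_A}{d} \otimes \rho'_B$ required by Definition~\ref{def:auni}. In fact there is an immediate shortcut: by its very definition, the outermost map $Pr_A$ sends every $\tau \in \St(B)$ to $\frac{I_A}{d} \otimes \tau$, so every output of $\Sp_{AB}$ already has $A$-marginal equal to $\frac{I_A}{d}$. The $A$-unital condition is therefore essentially automatic; the only remaining work is to exhibit $\rho'_B$ concretely.

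To carry out the computation, I would proceed in three short steps. First, using $U_s = \sum_{ij} \ket{ji}\bra{ij}$ together with the standing assumption $d_A = d_B = d$, a direct check in the computational basis gives $Sw_{AB}\bigl(\frac{I_A}{d} \otimes \rho_B\bigr) = \tilde{\rho}_A \otimes \frac{I_B}{d}$, where $\tilde{\rho}_A$ is the operator $\rho_B$ transferred into the $A$ register via the canonical basis identification (this is just the familiar fact that the swap exchanges the two tensor factors). Second, applying $\Tr_A$ to this product state yields $\Tr(\tilde{\rho}_A)\,\frac{I_B}{d} = \frac{I_B}{d}$. Third, $Pr_A$ tensors $\frac{I_A}{d}$ in front, giving $\Sp_{AB}\bigl(\frac{I_A}{d} \otimes \rho_B\bigr) = \frac{I_A}{d} \otimes \frac{I_B}{d}$. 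Since this holds for every $\rho_B \in \St(B)$, with $\rho'_B = \frac{I_B}{d}$, Definition~\ref{def:auni} gives $\Sp_{AB} \in \unip AB$.

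I do not anticipate any substantial obstacle. The structure of $Pr_A$ effectively hands us $A$-unitality for free, and the two intermediate steps are one-line computations. The only point that requires a little care is the bookkeeping of which copy of $\Hil$ occupies which tensor slot after the swap; this relies on the equal-dimension assumption $d_A = d_B = d$ fixed at the start of Section~\ref{sec:aunital_sep}, and on the basis identification between $\Hil^A$ and $\Hil^B$ that is used implicitly throughout.
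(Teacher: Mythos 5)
Your computation is correct and essentially identical to the paper's proof: the swap sends $\frac{I_A}{d}\otimes\rho_B$ to $\rho_A\otimes\frac{I_B}{d}$, the partial trace leaves $\frac{I_B}{d}$, and $Pr_A$ returns $\frac{I_A}{d}\otimes\frac{I_B}{d}$, which is of the required form for every input $\rho_B$. One small caution on your ``shortcut'': $A$-unitality demands that the output be the \emph{product} $\frac{I_A}{d}\otimes\rho'_B$, not merely that its $A$-marginal equal $\frac{I_A}{d}$, but since $Pr_A$ by construction outputs literal products of this form, the observation is still sound.
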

\begin{proof}
    Let $\sigma_B \in \B(B)$. We have 
    \begin{align}
        \begin{split}
            \Sp_{AB} \left(\frac{I_A}{d} \otimes \sigma_B\right)  
                                       &= Pr_A \circ \Tr_A \circ Sw_{AB} \left(\frac{I_A}{d} \otimes \sigma_B\right) 
                                       = Pr_A \circ \Tr_A \left(\sigma_A \otimes \frac{I_B}{d}\right) \\
                                       &= Pr_A \left( \frac{I_B}{d}\right) 
                                       = \frac{I_A}{d} \otimes \frac{I_B}{d}.
        \end{split}
    \end{align}
    Thus, using Def.~\ref{def:auni}, $\Sp_{AB} \in \unip AB$.
\end{proof}

\begin{lemma}
    \label{lem:ex_nonsep}
    $\Sp_{AB} \notin \sepp AB$.
\end{lemma}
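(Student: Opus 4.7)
The plan is to show that $\Sp_{AB}$ is not separable by feeding $\id_{A'B'}\otimes\Sp_{AB}$ a state which is product (and hence separable) across the bipartition $(A'A \,|\, B'B)$, and then exhibiting entanglement in the output across that same bipartition. This suffices because if $\Sp_{AB}$ admitted a separable Kraus decomposition $\Sp_{AB}(\rho)=\sum_i(K^i_A\otimes L^i_B)\,\rho\,(K^i_A\otimes L^i_B)^\dagger$, then $\id_{A'B'}\otimes\Sp_{AB}$ would inherit Kraus operators $(I_{A'}\otimes K^i_A)\otimes(I_{B'}\otimes L^i_B)$, which are manifestly of product form with respect to the cut $(A'A \,|\, B'B)$; such a channel cannot create entanglement across that cut.

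The natural choice of input is the product of two maximally entangled states, $\ket{\Phi^+}_{A'A}\otimes\ket{\Phi^+}_{B'B}$, with $A'\cong A$ and $B'\cong B$. I would then push it through the three constituent channels of $\Sp_{AB}$. The swap $Sw_{AB}$ rearranges $\frac{1}{d}\sum_{ij}\ket{i}_{A'}\ket{i}_A\ket{j}_{B'}\ket{j}_B$ into $\frac{1}{d}\sum_{ij}\ket{i}_{A'}\ket{j}_A\ket{j}_{B'}\ket{i}_B$, which regroups as the pure state $\ket{\Phi^+}_{A'B}\otimes\ket{\Phi^+}_{AB'}$. The partial trace $\Tr_A$ then yields $\ket{\Phi^+}\bra{\Phi^+}_{A'B}\otimes\tfrac{I_{B'}}{d}$, and finally $Pr_A$ attaches $\tfrac{I_A}{d}$ to produce
\begin{equation}
\left(\id_{A'B'}\otimes\Sp_{AB}\right)\left(\ket{\Phi^+}\bra{\Phi^+}_{A'A}\otimes\ket{\Phi^+}\bra{\Phi^+}_{B'B}\right) \;=\; \tfrac{I_A}{d}\otimes\tfrac{I_{B'}}{d}\otimes\ket{\Phi^+}\bra{\Phi^+}_{A'B}.
\end{equation}

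The $\ket{\Phi^+}_{A'B}$ factor straddles the bipartition with $A'\subset A'A$ and $B\subset B'B$. To make ``this is entangled across $(A'A\,|\,B'B)$'' rigorous, I would argue by contradiction: if the output admitted a separable decomposition $\sum_k p_k\,\mu^k_{A'A}\otimes\nu^k_{B'B}$, then tracing out $A$ and $B'$ would give the separable state $\sum_k p_k\,\Tr_A(\mu^k_{A'A})\otimes\Tr_{B'}(\nu^k_{B'B})$ on $A'B$; yet that very partial trace, computed directly from the expression above, equals $\ket{\Phi^+}\bra{\Phi^+}_{A'B}$, which is maximally entangled across $A':B$. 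This contradiction forces $\Sp_{AB}\notin\sepp AB$. The only genuine obstacle is the index bookkeeping after the swap; once the regrouping into $\ket{\Phi^+}_{A'B}\otimes\ket{\Phi^+}_{AB'}$ is verified, the rest is a one-line application of the fact that ancilla-extended separable channels preserve separability across the induced bipartition.
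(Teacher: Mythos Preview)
Your proof is correct and follows essentially the same route as the paper: feed $\id_{A'B'}\otimes\Sp_{AB}$ a state separable across $A'A\,|\,B'B$ and observe that the output carries a $\ket{\Phi^+}_{A'B}$ factor, hence is entangled across that cut. The only difference is cosmetic: the paper uses the simpler input $\ketbra{\phi^+}{\phi^+}_{A'A}\otimes\ketbra{00}{00}_{B'B}$ (so no regrouping is needed after the swap, and the $B'$ register stays pure), whereas you use $\ket{\Phi^+}_{A'A}\otimes\ket{\Phi^+}_{B'B}$ and add a short partial-trace argument to certify the entanglement; both work equally well.
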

\begin{proof}
    From the discussion following Def.~\ref{def:sep}, we have that $\sepp AB = \sepscompfree AB$. Thus, it is sufficient to prove that $\Sp_{AB} \notin \sepscompfree AB$. Consider Hilbert spaces $\Hil^{A'}$ and $\Hil^{B'}$ where $d_{A'}$ and $d_{B'}$ are also equal to $d$. Take the state $\chi \in \St(A'AB'B)$ defined as $\chi \equiv \ketbra {\phi^+}{\phi^+}_{A'A} \otimes \ketbra {00}{00}_{B'B}$, where $\ketbra {\phi^+}{\phi^+}_{A'A}$ is the entangled state $\frac{1}{d} \sum_{ij} \ketbra {i}{j}_{A'} \otimes \ketbra {i}{j}_{A}$. The state $\chi$ is a product state across the $A'A|B'B$ cut, and hence $\chi \in \seps {A'A}{B'B}.$ We now apply $\id_{A'B'} \otimes \Sp_{AB}$ to the state $\chi$.
    \begin{align}
    \begin{split}
        \chi' &= \left( \id_{A'B'} \otimes \Sp_{AB} \right) (\chi) \\
              &= \left(\id_{A'B'} \otimes \Sp_{AB} \right) \left(\frac{1}{d}\sum_{ij} \ketbra {i}{j}_{A'} \otimes \ketbra {i}{j}_A \otimes \ketbra {0}{0}_{B'} \otimes \ketbra {0}{0}_B \right)  \\
              &=\left(\id_{A'B'} \otimes \: Pr_A \circ \Tr_A \circ Sw_{AB}\right)\left(\frac{1}{d}\sum_{ij} \ketbra {i}{j}_{A'} \otimes \ketbra {i}{j}_A \otimes \ketbra {0}{0}_{B'} \otimes \ketbra {0}{0}_B \right)  \\
              &=\left(\id_{A'B'} \left(\frac{1}{d}\sum_{ij} \ketbra {i}{j}_{A'} \otimes \ketbra {0}{0}_{B'}\right)\right) \otimes \left( \left(Pr_A \circ \Tr_A \circ Sw_{AB}\right)\left( \ketbra {i}{j}_A \otimes \ketbra {0}{0}_B \right)\right) \\
               &=\left(\frac{1}{d}\sum_{ij} \ketbra {i}{j}_{A'} \otimes \ketbra {0}{0}_{B'}\right) \otimes \left( (Pr_A \circ \Tr_A)\left( \ketbra {0}{0}_A \otimes \ketbra {i}{j}_B \right)\right)  \\
               &=\left(\frac{1}{d}\sum_{ij} \ketbra {i}{j}_{A'} \otimes \ketbra {0}{0}_{B'}\right) \otimes \left( Pr_A\left( \ketbra {i}{j}_B \right)\right)  \\
              &= \frac{1}{d}\sum_{ij} \ketbra {i}{j}_{A'} \otimes \ketbra {0}{0}_{B'} \otimes \frac{I_A}{d} \otimes \ketbra {i}{j}_B \\
        &= \ketbra{\phi^+}{\phi^+}_{A'B} \otimes \frac{I_A}{d}\otimes \ketbra{0}{0}_{B'}.
        \end{split}
    \end{align}
    Observe that $\chi'$ is entangled across the $A'A|B'B$ cut. Thus, $\chi' \notin \seps {A'A}{B'B}$. From the definition of $\sepscompfree AB$, this implies that $\Sp_{AB} \notin \sepscompfree AB$, hence $\Sp_{AB} \notin \sepp AB$. The action of $\Sp_{AB}$ is depicted in Fig.~\ref{fig:prepare_and_swap}.
\end{proof}
From Lemmas \ref{lem:ex_auni} and \ref{lem:ex_nonsep}, we have that $\Sp_{AB} \in \unip AB$ but $\Sp_{AB} \notin \sepp AB$, and thus, Theorem \ref{thm:aunital_nonsep} follows.

\begin{figure}
    \centering
    \includegraphics[width=0.7\linewidth]{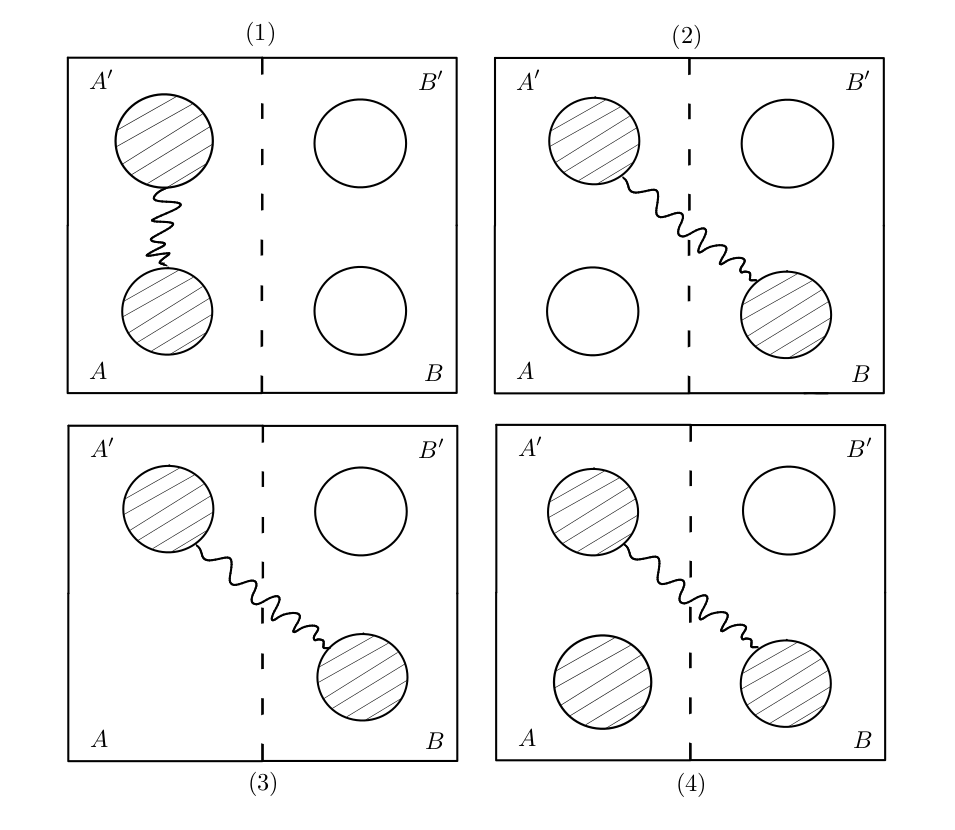}
    \caption{Demonstration of $\Sp_{AB}$ acting on $\chi$. Subfigure $(1)$ depicts the state $\chi$, which is separable across the $A'A|B'B$ cut. Upon applying $\id_{A'B'} \otimes Sw_{AB}$, the states of subsystems $A$ and $B$ are swapped, as shown in subfigure $(2)$. Application of $\Tr_A$ is depicted in subfigure $(3)$, and finally, $Pr_A$ prepares the state $\frac{I_A}{d}$ and this is depicted in subfigure $(4)$. Thus, $\Sp_{AB}$ is $A$-unital, but not separable, as the channel generates entanglement across the $A'A|B'B$ cut.}
    \label{fig:prepare_and_swap}
\end{figure}
\subsection{Non-A-Unital Separable Channels}

Examples of channels that are non-A-unital but separable are replacement channels \cite{Chitambar2019} $R$  which trace out a given state and prepare a product state $\rho_A \otimes \rho_B$ where $\rho_A \neq \frac{I_A}{d}$. Since for all $\sigma_B \in \St(B)$, we have $R \left(\frac{I_A}{d} \otimes \sigma_B \right ) \neq \frac{I_A}{d} \otimes \gamma_B$ for some $\gamma \in \St(B)$, $R$ is not an $A$-unital channel. However, since both tracing out as well as preparation of a product state are separable channels, $R$ is a separable channel. A specific example of a channel which prepares the state $\ketbra {00}{00}_{AB}$ is
\begin{equation}
    R_0(\rho_{AB}) = \sum_{ij} \left( \ketbra {0}{i}_A \otimes \ketbra {0}{j}_B \right) (\rho_{AB}) \left (\ketbra {i}{0}_A \otimes \ketbra {j}{0}_B \right),
\end{equation}
where $\{\ket i\}_i$ and $\{\ket j\}_j$ are orthonormal bases for $\Hil^A$ and $\Hil^B$ respectively. 
\subsection{A-Unital Separable Channels}

A trivial example of a channel that is both $A$-unital and separable is the identity channel. A more non-trivial example is any channel of the form  
\begin{equation}
    C_{AB}(\rho) = \sum_i \left (K^i_A \otimes L^i_B \right ) \rho \left ( K^i_A \otimes L^i_B \right )^\dagger
\end{equation}
where $\sum_i \left ( K^i_A \otimes L^i_B \right )^\dagger \left (K^i_A \otimes L^i_B \right ) = I_{AB}$ and each $K^i_A$ is a unitary channel. $C_{AB}$ is separable by definition. It is also $A$-unital as shown below. Let $\sigma \in \St(B)$. We have
\begin{align}
    \begin{split}
        C_{AB}\left(\frac{I_A}{d} \otimes \sigma_B\right) &= \sum_i \left (K^i_A \otimes L^i_B \right ) \left( \frac{I_A}{d} \otimes \sigma_B \right) \left ( K^i_A \otimes L^i_B \right )^\dagger \\
                                                          &= \sum_i \frac{1}{d}K^i_A I_A K^{i \dagger}_A \otimes L^i_B \sigma_B L^{i \dagger}_B = \frac{I_A}{d} \otimes \sum_i L^i_B \sigma_B L^{i \dagger}_B.
    \end{split}
\end{align}
\subsection{A-Unital and Entanglement-Breaking Channels}

Both $\unip AB$ and $\sepp AB$ are concerned with correlations across the $A|B$ split. Thus, we relate both these classes with channels of the form $\id_A \otimes \Phi_B$ where $\Phi_B \in \ebp B$, that is $\Phi_B$ is an entanglement-breaking channel (Def.~\ref{def:eb}), as they destroy all entanglement across the $A|B$ split. These channels are $A$-unital, as $(\id_A \otimes \Phi_B) \left( \frac{I_A}{d}\otimes \sigma_B \right) = \frac{I_A}{d} \otimes \Phi(\sigma)_B$ for all $\sigma_B \in \St(B)$, as well as separable, as the Kraus operators for the channel are $\{I_A \otimes K^i_B\}_i$ where $\{K^i_B\}_i$ are the Kraus operators for $\Phi_B$. This is depicted in Figure \ref{fig:sep_auni}. On the contrary, channels of the form $\Phi_A \otimes \id_B$ where $\Phi_A \in \ebp A$ are not necessarily $A$-unital, although they are separable. As an example, consider the following classical-quantum channel $\M_A : \St(A) \rightarrow \St(A)$, that acts as  
\begin{equation}
    \M_A (\rho) = \sum_i \bra i \rho \ket i \ketbra {0}{0}_A.
\end{equation}
\sloppy All classical-quantum channels are entanglement-breaking \cite{Wilde2013}, however, the channel $\M_A \otimes \id_B$ is not $A$-unital, as we have $\left (\M_A \otimes \id_B \right) \left(\frac{I_A}{d} \otimes \sigma_B \right) = \ketbra{0}{0} \otimes \sigma_B$ which is not of the form $\frac{I_A}{d} \otimes \gamma_B$ for some $\gamma_B \in \St(B)$. 

$A$-unital channels are closely related to unital channels. Indeed, as given in Def.~\ref{def:unital}, while $A$-unital channels map states of the form $\frac{I_A}{d_A} \otimes \sigma_B$ to $\frac{I_A}{d_A} \otimes \sigma'_B$, unital channels map $\frac{I_{AB}}{d_{AB}}$ to $\frac{I_{AB}}{d_{AB}}$. In the next section, we explore this connection further. We show that while $A$-unital channels are the largest class of completely free channels for $\cvenn AB$, unital channels form the largest class of completely free channels for a previously defined, related set $\acvenn AB$, the set of all states whose conditional entropy remains non-negative under global unitary operations.  The relationship between $A$-unital, $B$-unital and unital channels is depicted in Fig.~\ref{fig:auni_buni_uni}.

\section{Unital Operations and Absolute Non-Negative Conditional Entropy}
\label{sec:unital}
\subsection{Absolutely Non-Negative Conditional Entropy States}
Previously, the absolute version of $\cvenn AB$ (analogous to absolutely separable states \cite{Kuifmmode2001, Halder2021}) was defined and characterized for the case where $d_A = d_B = 2$ \cite{Patro2017}. Here, we extend the definition by characterizing it for the case where $d_A = d_B = d$. 

\begin{definition}
    \label{def:acvenn} 
    A quantum state $\rho \in \St(AB)$, where $d_A = d_B = d$, which satisfies $\cve AB(U\rho U^\dagger) \geq 0$ for all unitary channels $U: \St(AB) \rightarrow \St(AB)$ is called an absolutely non-negative conditional entropy state. The set of absolutely non-negative conditional entropy states in $\St(AB)$ is denoted by $\acvenn AB$.
\end{definition}

The set $\acvenn AB$ is well-characterized in terms of the von Neumann entropy, as shown in the following theorem.

\begin{theorem}
    \label{thm:acvenn_def}
    $\rho_{AB} \in \acvenn AB$ if and only if $\ent (\rho_{AB}) \geq \log d$.
\end{theorem}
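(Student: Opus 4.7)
My plan is to prove the two directions of the equivalence separately. The forward direction, $\ent(\rho_{AB}) \geq \log d \Rightarrow \rho \in \acvenn{A}{B}$, will be immediate from unitary invariance of entropy together with the dimension bound on the marginal. The reverse direction will require exhibiting an explicit witnessing unitary, and I plan to build one by rotating the eigenbasis of $\rho$ onto a maximally entangled basis.

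For the forward implication, I will invoke two standard facts: von Neumann entropy is invariant under unitary conjugation, so $\ent(U\rho U^\dagger) = \ent(\rho_{AB})$ for every unitary $U$ on $\Hil^A \otimes \Hil^B$; and the entropy of any state on the $d$-dimensional space $\Hil^B$ is at most $\log d$. Combining these,
\begin{equation*}
\cve{A}{B}(U\rho U^\dagger) = \ent(U\rho U^\dagger) - \ent(\Tr_A(U\rho U^\dagger)) \geq \log d - \log d = 0,
\end{equation*}
which is precisely the defining condition of $\acvenn{A}{B}$ in Definition \ref{def:acvenn}.

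For the reverse implication I intend to prove the contrapositive: whenever $\ent(\rho_{AB}) < \log d$, I will exhibit a unitary $U$ with $\cve{A}{B}(U\rho U^\dagger) < 0$. The key idea is to choose $U$ so that $U\rho U^\dagger$ is supported entirely on an orthonormal basis of maximally entangled states — for instance the generalized Bell basis $\ket{\beta_{kl}} = (I_A \otimes X^k Z^l) \ket{\phi^+}$ with $\ket{\phi^+} = \tfrac{1}{\sqrt{d}} \sum_j \ket{jj}$ and $X,Z$ the Heisenberg--Weyl operators on $\Hil^B$. Writing the spectral decomposition $\rho_{AB} = \sum_i p_i \ketbra{\phi_i}{\phi_i}$ and defining $U$ by $U \ket{\phi_i} = \ket{\beta_i}$, the rotated state becomes $U\rho U^\dagger = \sum_i p_i \ketbra{\beta_i}{\beta_i}$. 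Because each $\ket{\beta_i}$ is maximally entangled, $\Tr_A \ketbra{\beta_i}{\beta_i} = I_B/d$ for every $i$, so by linearity $\Tr_A(U\rho U^\dagger) = I_B/d$, and therefore $\cve{A}{B}(U\rho U^\dagger) = \ent(\rho_{AB}) - \log d < 0$, forcing $\rho \notin \acvenn{A}{B}$.

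The only non-routine ingredient is the existence, for every $d$, of an orthonormal basis of $\mathbb{C}^d \otimes \mathbb{C}^d$ made entirely of maximally entangled states; this is the main (but mild) obstacle, and the Heisenberg--Weyl construction above supplies one explicitly. With that fact in hand, both directions of the equivalence hinge on a single driving observation: $\log d$ is simultaneously the ceiling on the entropy of the $d$-dimensional $B$-marginal and the entropy that marginal attains as soon as the eigenbasis of $\rho$ can be rotated onto a maximally entangled basis.
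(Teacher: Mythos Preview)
Your proposal is correct and follows essentially the same approach as the paper: the forward direction is identical, and for the reverse direction both you and the paper rotate the eigenbasis of $\rho$ onto an orthonormal basis of maximally entangled states so that the $B$-marginal becomes $I_B/d$. The paper uses the teleportation basis $\ket{\psi_{nm}} = d^{-1/2}\sum_j e^{2\pi ijn/d}\ket{j}\otimes\ket{(j+m)\bmod d}$, which is precisely your Heisenberg--Weyl Bell basis, and phrases the argument as a direct implication rather than a contrapositive, but the logic is the same.
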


\begin{proof}
    Let $\rho_{AB} \in \St(AB)$ and $\ent(\rho_{AB}) \geq \log d$. As the maximum von Neumann entropy of the subsystem $B$ is upperbounded by $\log d$, we have $\forall \, U, \, \ent(U\rho_{AB} U^\dagger) = \ent(\rho_{AB}) \geq \log d \geq \ent \left((U\rho_{AB} U^\dagger)_B\right)$. Therefore, $\forall \, U, \, \cve {A}{B}(U\rho U^\dagger) = \ent(U\rho U^\dagger) - \ent\left((U \rho_{AB} \rho^\dagger)_B\right) \geq 0$ and $\rho_{AB} \in \acvenn AB$.   

        To prove the ``only if" part of the statement, consider the following set of vectors $\{\ket {\psi_{nm}}\}_{nm}$ which is a basis for the vector space $\Hil^A \otimes \Hil^B$ \cite{Bennett1993}
        \begin{equation}
            \label{eq:teleportation_basis}
            \ket {\psi_{nm}} = \sum_j e^{2 \pi i jn /d} \ket j \otimes \ket{(j+m) \mod d} / \sqrt d.
        \end{equation}
        Notice that for each $\ket {\psi_{nm}}$, we have $\left(\ket {\psi_{nm}} \bra {\psi_{nm}}\right)_B = \frac{I}{d}$. Now, let $\sigma_{AB} \in \acvenn AB$. Let the set of vectors $\{\ket{\gamma_{nm}}\}_{nm}$ be the eigenbasis of $\sigma_{AB}$. Consider the unitary channel $U_e \equiv \sum_{nm} \ket{\psi_{nm}} \bra{\gamma_{nm}}$. The state $U_e \sigma_{AB} U_e^\dagger$ is diagonal in the $\ket{\psi_{nm}}$ basis, hence, $\left(U_e \sigma_{AB} U_e^\dagger\right)_B = \frac{I}{d}$, and $\ent\left((U_e \sigma_{AB} U_e^\dagger)_B\right) = \log d$. 

        By definition, we have $\cve {A}{B}(U \sigma_{AB} U^\dagger) \geq 0$ for all $U$, in particular for $U_e$. Therefore, $\cve {A}{B}(U_e \sigma_{AB} U_e^\dagger) = \ent(U_e \sigma_{AB} U_e^\dagger) - \ent\left((U_e \sigma_{AB} U_e^\dagger)_B\right)  \geq 0$, which implies that $\ent(\sigma_{AB}) - \log d \geq 0$. Thus, $\, \forall \, \sigma_{AB} \in \acvenn AB$, we have $\ent(\sigma_{AB}) \geq \log d$.   
\end{proof}

\begin{figure}
\begin{center}
    \includegraphics[width=0.5\textwidth]{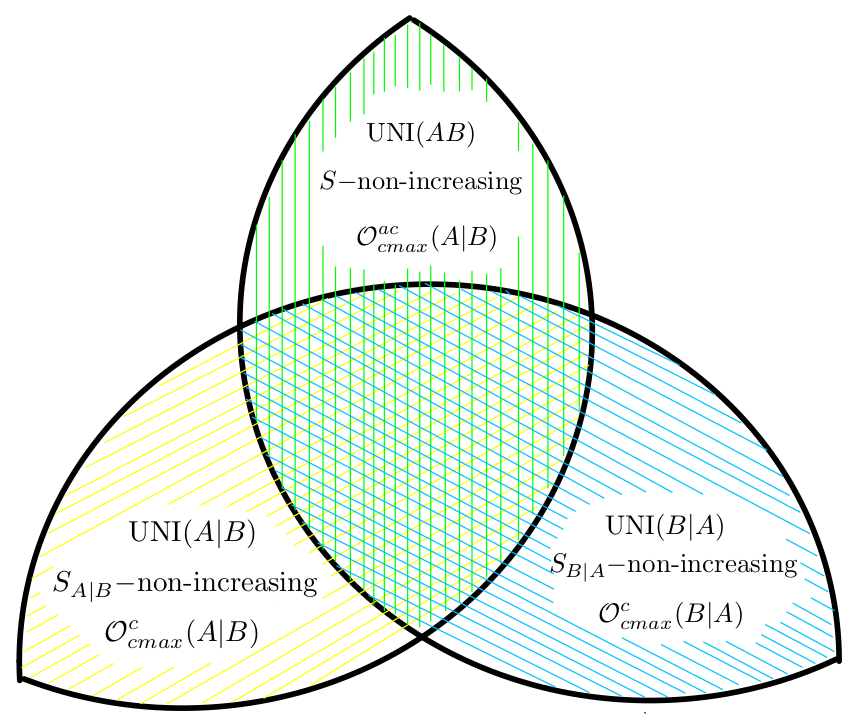}
    \caption{The relationship between $A$-unital, $B$-unital and unital channels when $d_A = d_B$.}
    \label{fig:auni_buni_uni}
\end{center}
\end{figure}

\subsection{Completely Free Operations for Absolute Non-Negative Conditional Entropy States}
Here, we define the class of completely free operations for $\acvenn AB$, respecting the $d_A = d_B = d$ condition in the definition of $\acvenn AB$. 

\begin{definition}
    \label{def:acvenncompfree}
    A channel $\N_{AB}: \St(AB) \rightarrow \St(AB)$ is completely free for $\acvenn AB$ if for all finite dimensional Hilbert spaces $\Hil^{A'}$  and $\Hil^{B'}$, where $d_{A'} = d_{B'} = d'$ and $\rho \in \acvenn {A'A}{B'B}$, we have  $\left( \id_{A'B'} \otimes \N_{AB} \right)(\rho) \in \acvenn {A'A}{B'B}$. The class of all such completely free operations mapping $\St(AB)$ to itself is denoted by $\acvenncompfree AB$. 
\end{definition}

We now prove that for the above definition of $\acvenncompfree AB$, unitality is a necessary and sufficient condition for complete freeness. 

\begin{theorem}
    \label{thm:unital_acvenn}
    A channel $\N \in \acvenncompfree AB$ if and only if it is unital.
\end{theorem}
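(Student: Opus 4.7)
The plan is to prove both directions directly, using the entropy-based characterization of $\acvenn AB$ just established in Theorem \ref{thm:acvenn_def} together with the standard fact (already cited above Def.~\ref{def:unital}) that a channel on a fixed Hilbert space is unital if and only if it does not decrease von Neumann entropy.

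For the ``if'' direction, I would observe that if $\N$ is unital then $\id_{A'B'} \otimes \N$ is again unital on $\St(A'AB'B)$ (since tensor products of unital maps are unital), hence entropy non-decreasing. For any $\rho \in \acvenn{A'A}{B'B}$, applying Theorem \ref{thm:acvenn_def} to the combined systems---each of dimension $d'd$---gives $\ent(\rho) \geq \log(d'd)$, so the output satisfies $\ent((\id_{A'B'} \otimes \N)(\rho)) \geq \log(d'd)$ and therefore belongs to $\acvenn{A'A}{B'B}$.

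For the ``only if'' direction, the idea is to engineer a test state that exactly saturates the entropy threshold, so that complete-freeness forces the action of $\N$ on $I_{AB}/d^2$ to retain maximal entropy. Concretely, I would take $d' = d$ and any pure state $\ket{\psi}_{A'B'}$, and consider
\begin{equation*}
    \rho \;=\; \ketbra{\psi}{\psi}_{A'B'} \otimes \frac{I_{AB}}{d^2},
\end{equation*}
whose entropy is $0 + 2\log d = \log(d'd)$, so that $\rho \in \acvenn{A'A}{B'B}$ by Theorem \ref{thm:acvenn_def}. After applying $\id_{A'B'} \otimes \N$ the entropy reduces to $\ent(\N(I_{AB}/d^2))$, and complete-freeness demands this be at least $2\log d$. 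Since $\N(I_{AB}/d^2)$ is a density matrix on a $d^2$-dimensional space, that entropy bound is attained only by $I_{AB}/d^2$, forcing $\N$ to be unital.

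I do not expect any genuine obstacle here: both directions follow cleanly once $\acvenn{A'A}{B'B}$ is recognized as an entropy sublevel set. The one small conceptual step is the choice of test state in the necessity direction---pairing a pure probe on the reference systems with the maximally mixed state on the target is exactly what converts ``output entropy above threshold'' into ``output equal to the identity'' on the target, thereby extracting unitality from mere complete-freeness.
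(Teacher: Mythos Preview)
Your proposal is correct and mirrors the paper's proof almost exactly: both directions use Theorem~\ref{thm:acvenn_def} together with the entropy-monotonicity of unital maps, and for necessity both use the test state $\ketbra{\psi}{\psi}_{A'B'}\otimes \frac{I_{AB}}{d^2}$ with $d'=d$. The only cosmetic difference is that the paper phrases the ``only if'' direction as a contrapositive (non-unital $\Rightarrow$ not completely free) whereas you argue directly; the substance is identical.
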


\begin{proof} 
    Let $\Hil^A, \Hil^B, \Hil^{A'}$ and $\Hil^{B'}$ be finite dimensional Hilbert spaces such that $d_A = d_B = d$ and $d_{A'} = d_{B'} = d'$. Let $\N_{AB}$ be a unital channel mapping $\St(AB)$ to $\St(AB)$. Then, notice that the channel $\N' = \id_{A'B'} \otimes \N_{AB}$ is also a unital channel mapping $\St(A'AB'B)$ to itself. This unitality implies we have $\ent\left(\N'(\rho)\right) \geq \ent\left(\rho \right) \, \forall \, \rho \in \St(A'AB'B)$ \cite{Wilde2013}. Hence, $\forall \, \sigma \in \acvenn {A'A}{B'B}$, we have $\ent\left(\N'(\sigma)\right) \geq \ent\left(\sigma \right) \geq \log dd'$, using Theorem \ref{thm:acvenn_def}. Thus, $\N'(\sigma) \in \acvenn {A'A}{B'B}$. This implies that $\N'$ is a free operation, which in turn implies that $\N_{AB} \in \acvenncompfree AB$.

    Let $\M_{AB}$ be a non-unital channel in $\St(AB)$. Hence, $\M_{AB}\left(\frac{I_{AB}}{d^2}\right) = \sigma_{AB}$ where $\sigma_{AB} \neq \frac{I_{AB}}{d^2}$. Thus, $\ent(\sigma_{AB}) < \log d^2$, since the maximum entropy for $\rho \in \St(AB) = \log d^2$ and is achieved only at $\frac{I_{AB}}{d^2}$. Consider the state $\chi = \gamma_{A'B'} \otimes \frac{I_{AB}}{d^2}$ where $\gamma$ is a pure state and $d' = d$. Hence $\ent(\chi) = \log d^2$ which implies that $\chi \in \acvenn {A'A}{B'B}$. We have $\chi' = \left(\id_{A'B'} \otimes \M_{AB}\right)\left(\chi\right)  = \gamma_{A'B'} \otimes \sigma_{AB}$. Therefore, $\ent(\chi') < \log d^2$ and $\chi' \notin \acvenn {A'A}{B'B}$. This implies that $\id_{A'B'} \otimes \M_{AB}$ is not a free operation, and therefore, $\M_{AB} \notin \acvenncompfree AB.$ 

\end{proof}

\begin{figure}
    \centering
    \includegraphics[width=0.8\textwidth]{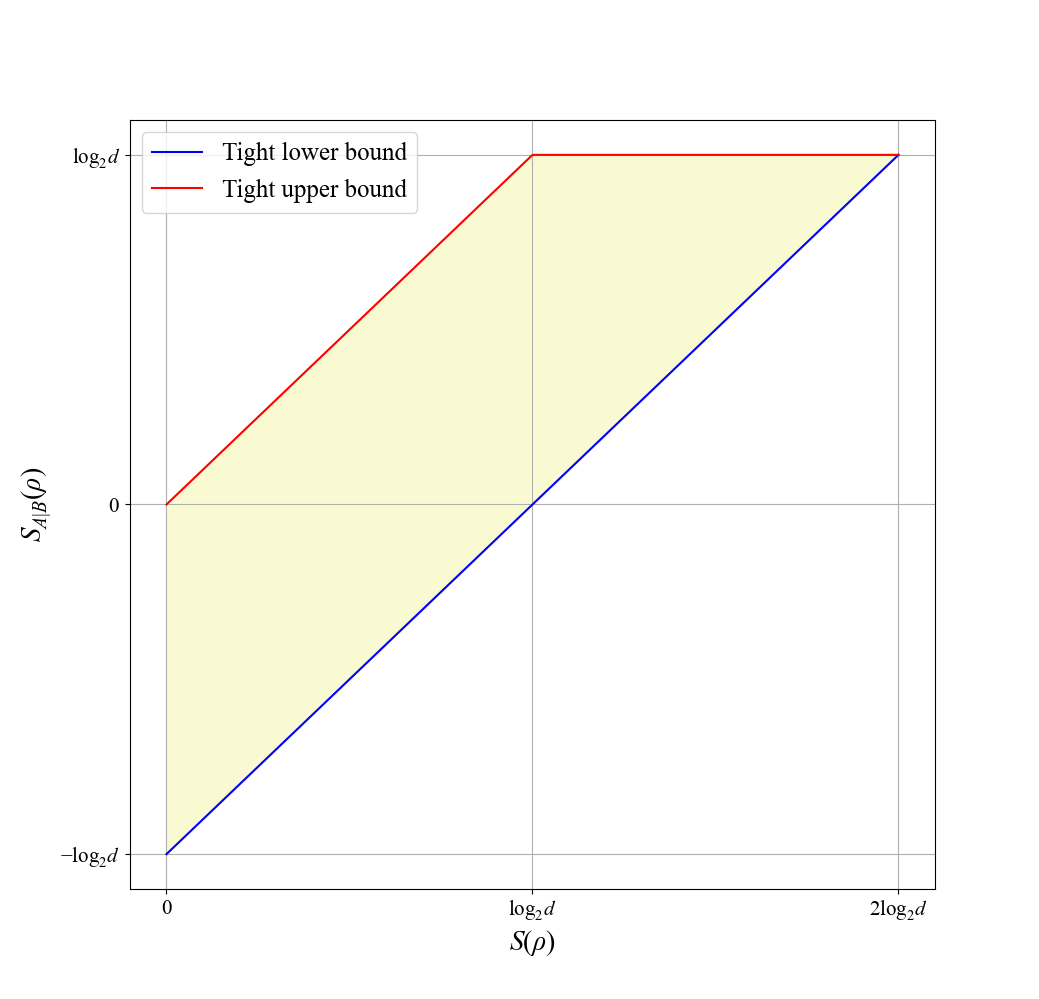}
    \caption{Upper and lower bounds for $\cve AB(\rho)$ given $\ent(\rho)$.}

    \label{fig:upper_lower_bounds}
\end{figure}

\section{Upper and Lower Bounds for Conditional Entropy given Entropy}
\label{sec:bounds}

The characterization of $\acvenn AB$ via entropy suggests that the entropy of a state can determine the values of conditional entropy that the state can possess. In the following section, we discuss upper and lower bounds for conditional entropy, given the entropy of a state $\rho_{AB}$, where $d_A = d_B = d$.

\begin{theorem}
    \label{thm:bounds}
    Let $E_x \equiv \{\cve AB(\rho) \; | \; \ent(\rho) = x \; \forall \; \rho \in \St(AB)\}$. Then, the tight lower bound of $E_x$ is $x - \log d$ and the tight upper bound of $E_x$ is $\max(x, \log d).$
\end{theorem}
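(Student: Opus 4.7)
Both bounds come down to analysing $\cve AB(\rho) = \ent(\rho) - \ent(\rho_B) = x - \ent(\rho_B)$: I would read off each bound from the permissible range of $\ent(\rho_B)$ on a $d$-dimensional system, and then produce an explicit family of states that saturates it at each admissible $x \in [0, 2\log d]$.

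\emph{Lower bound.} The inequality $\cve AB(\rho) \geq x - \log d$ follows immediately from $\ent(\rho_B) \leq \log d$. For tightness across all $x \in [0, 2\log d]$, I would use the interpolating family $\rho(p) = p\, \ketbra{\phi^+}{\phi^+} + (1-p)\, I_{AB}/d^2$ with $\ket{\phi^+} = \tfrac{1}{\sqrt d} \sum_j \ket{jj}_{AB}$. Both summands share the marginal $I_B/d$, so $\ent(\rho(p)_B) = \log d$ for every $p$, while $\ent(\rho(p))$ varies continuously from $0$ at $p=1$ to $2\log d$ at $p=0$. By the intermediate value theorem, every $x \in [0, 2\log d]$ is realised by some $\rho(p)$, saturating $\cve AB(\rho(p)) = x - \log d$.

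\emph{Upper bound.} The formula $\max(x, \log d)$ naturally splits the analysis into two regimes meeting at $x = \log d$. For $x \geq \log d$, $\max(x, \log d) = x$, and the bound $\cve AB(\rho) \leq x$ is immediate from $\ent(\rho_B) \geq 0$. For $x \leq \log d$, $\max(x, \log d) = \log d$, and $\cve AB(\rho) \leq \log d$ is the standard maximum-conditional-entropy inequality already invoked in Lemma \ref{lem:ncve_auni}. For saturation I would provide, in each regime, a product state that drives the relevant marginal to its extreme: $\sigma_A \otimes \ketbra{0}{0}_B$ with $\ent(\sigma_A) = x$ when $x \leq \log d$ (forcing $\ent(\rho_B) = 0$), and $(I_A/d) \otimes \sigma_B$ with $\ent(\sigma_B) = x - \log d$ when $x \geq \log d$ (forcing $\ent(\rho_A) = \log d$). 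The main bookkeeping obstacle is checking that the marginal-entropy specifications used in each construction are feasible exactly in their declared regime and that the constructions agree continuously at the boundary $x = \log d$, which is where the piecewise formula $\max(x, \log d)$ changes branch.
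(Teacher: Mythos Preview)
Your lower-bound argument is correct and coincides with the paper's: the bound $\cve AB(\rho)\geq x-\log d$ from $\ent(\rho_B)\leq\log d$, saturated by the isotropic family $\rho(p)=p\ketbra{\phi^+}{\phi^+}+(1-p)I_{AB}/d^2$ via the intermediate value theorem, is exactly Lemma~\ref{lem:lower_bound}.

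The upper-bound argument, however, has a genuine inconsistency, traceable to a typo in the theorem statement: the paper's own Lemmas~\ref{lem:first_upper} and~\ref{lem:second_upper} actually prove that the tight upper bound is $\min(x,\log d)$, not $\max(x,\log d)$. By reading the $\max$ literally you have swapped the two regimes. In the regime $x\leq\log d$ you assert the bound $\cve AB(\rho)\leq\log d$ and then offer the saturating state $\sigma_A\otimes\ketbra{0}{0}_B$ with $\ent(\sigma_A)=x$; but that state has $\cve AB=x-0=x$, which does \emph{not} equal $\log d$ unless $x=\log d$. Symmetrically, in the regime $x\geq\log d$ you assert $\cve AB(\rho)\leq x$ and offer $(I_A/d)\otimes\sigma_B$ with $\ent(\sigma_B)=x-\log d$; that state has $\cve AB=\log d$, not $x$. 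So in neither regime does your witness meet your claimed bound.

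What is actually going on is that your two saturating constructions are correct, and they achieve precisely $\min(x,\log d)$: $\sigma_A\otimes\ketbra{0}{0}_B$ attains $x$ on $[0,\log d]$ (Lemma~\ref{lem:first_upper}), and $(I_A/d)\otimes\sigma_B$ attains $\log d$ on $[\log d,2\log d]$ (Lemma~\ref{lem:second_upper}). The matching tight inequalities are also the ones you wrote down, just assigned to the opposite regimes: $\cve AB(\rho)\leq x$ from $\ent(\rho_B)\geq 0$ is the operative bound when $x\leq\log d$, and $\cve AB(\rho)\leq\log d$ (the maximum-conditional-entropy inequality) is operative when $x\geq\log d$. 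Once you interchange the regime labels your plan becomes the paper's proof verbatim; as written, the tightness claims fail.
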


\begin{proof}
    Fig.~\ref{fig:upper_lower_bounds} depicts the bounds of the above theorem. The lower bound is proven in Lemma \ref{lem:lower_bound} and the upper bound is proven in Lemmas \ref{lem:first_upper} and \ref{lem:second_upper}.
\end{proof}

\begin{lemma}
    \label{lem:lower_bound}
    $\forall x \in [0, 2\log d]$, the tight lower bound of $E_x \equiv \{\cve AB(\rho) \mid \ent(\rho) = x \land \rho \in \St(AB) \}$ is $x - \log d$.
\end{lemma}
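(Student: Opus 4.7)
The plan is to split the argument into two parts: first establish the inequality $S_{A|B}(\rho) \geq x - \log d$ for every $\rho$ with $S(\rho) = x$, then exhibit a family of states that saturates it.

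For the lower bound, I would use the definition $S_{A|B}(\rho) = S(\rho) - S(\rho_B)$ together with the elementary fact that the von Neumann entropy of a state on a $d$-dimensional Hilbert space is at most $\log d$. Since $\rho_B$ lives on $\Hil^B$ with $d_B = d$, we get $S(\rho_B) \leq \log d$, hence $S_{A|B}(\rho) = x - S(\rho_B) \geq x - \log d$. This is a one-line argument and will be the easy half.

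For tightness, the goal is to build, for each $x \in [0, 2\log d]$, a state $\rho \in \St(AB)$ with $S(\rho) = x$ and $S(\rho_B) = \log d$ (so that $\rho_B = I_B/d$). The natural construction is to use the $d^2$-element orthonormal basis $\{|\psi_{nm}\rangle\}_{nm}$ of maximally entangled vectors introduced in Eq.~\ref{eq:teleportation_basis}, each of whose $B$-marginal equals $I_B/d$. I would consider states of the form
\begin{equation}
    \rho = \sum_{n,m} p_{nm} \ketbra{\psi_{nm}}{\psi_{nm}},
\end{equation}
where $\{p_{nm}\}$ is a probability distribution on $d^2$ outcomes. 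Since the $|\psi_{nm}\rangle$ are orthonormal, $S(\rho) = H(p)$, the Shannon entropy of $p$. Linearity of the partial trace together with $(\ketbra{\psi_{nm}}{\psi_{nm}})_B = I_B/d$ yields $\rho_B = I_B/d$, so $S(\rho_B) = \log d$ and therefore $S_{A|B}(\rho) = H(p) - \log d$.

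It only remains to show that $H(p)$ can take any value in $[0, 2\log d]$ as $p$ ranges over distributions on $d^2$ symbols. This follows by continuity: $H$ is continuous on the simplex, equals $0$ at a point mass, equals $2\log d$ at the uniform distribution, and the simplex is connected, so by the intermediate value theorem every $x \in [0, 2\log d]$ is attained. Choosing such a $p$ with $H(p) = x$ gives a state $\rho$ with $S(\rho) = x$ and $S_{A|B}(\rho) = x - \log d$, which establishes tightness. I do not anticipate a serious obstacle; the only subtlety is remembering to argue surjectivity of $H$ onto $[0, 2\log d]$ rather than just citing the extreme values.
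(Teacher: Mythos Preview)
Your proof is correct and follows essentially the same strategy as the paper: the lower bound via $S(\rho_B)\le\log d$ is identical, and tightness is achieved by exhibiting a family of states with $\rho_B=I_B/d$ whose global entropy sweeps $[0,2\log d]$ via the intermediate value theorem. The only difference is the choice of family: the paper uses the one-parameter isotropic family $\R(p)=p\ketbra{\phi^+}{\phi^+}+(1-p)\tfrac{I}{d^2}$, whereas you use the full Bell-diagonal family $\sum_{nm}p_{nm}\ketbra{\psi_{nm}}{\psi_{nm}}$; since $\ketbra{\phi^+}{\phi^+}=\ketbra{\psi_{00}}{\psi_{00}}$ and $\tfrac{I}{d^2}$ is the uniform Bell-diagonal state, the paper's family is in fact a one-parameter slice of yours, and your identification $S(\rho)=H(p)$ makes the surjectivity argument slightly more transparent than the paper's composition $S\circ\R$.
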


\begin{proof}
    Consider $\rho \in \St(AB) \mid \ent(\rho) = x$. Since $\ent(\rho_B) \leq \log d$ (Theorem 11.8 in \cite{Nielsen2000}), we have $\cve AB (\rho) \geq x - \log d$. Thus, 
    \begin{equation} 
        \label{lb_purity}
        \text{a lower bound of }E_x \text{ is } x - \log d. 
    \end{equation}

    It now remains to show that $\forall x \in [0, 2\log d]$, $x - \log d \in E_x$. Consider the function $\R: [0,1] \rightarrow \St(AB) : p \rightarrow p |\phi^+\rangle \langle \phi^+| + (1-p) \frac{I}{d^2}$ where $|\phi^+\rangle = \sum_{i=0}^{d-1}\frac{|ii\rangle}{\sqrt d}$. Note that $\forall p \in [0,1]$, we have $\R(p)_B = \frac{I}{d}$. Thus,
    \begin{equation}
        \label{s_b}
        \ent(\R(p)_B) = \log d.
    \end{equation}
    As $\ent : \St(AB) \rightarrow [0, 2\log d]$ is a continuous function (Box 11.2 in \cite{Nielsen2000}) and $\R$ can be shown to be a continuous function \footnote{This can be shown using the $\epsilon-\delta$ definition of continuity.}, we have $\ent \circ \R: [0,1] \rightarrow [0, 2\log d]$ is a continuous function. We have $\ent(\R(0)) = 2\log d$ and $\ent(\R(1)) = 0$. Thus, by the intermediate value theorem, 
    \begin{equation}
        \label{inter}
        \forall x \in [0, 2\log d],  \; \exists \; p_x \in [0,1] \mid \ent(\R(p_x)) = x.
    \end{equation}

    Using Eq. \ref{s_b} and \ref{inter}, we have
    \begin{equation}
        \label{attain}
        \cve AB(\R(p_x)) = \ent(\R(p_x)) - \ent(\R(p_x)_B) = x - \log d.
    \end{equation}
    $\ent(\R(p_x)) = x$ (Eq.~\ref{inter}) implies that $\cve AB(\R(p_x)) = x - \log d \in E_x$ (Eq.~\ref{attain}). Thus, as $x - \log d$ is a lower bound of $E_x$, and $x - \log d \in E_x$, it is the tight lower bound.
\end{proof}

\begin{lemma}
    \label{lem:first_upper}
    $\forall x \in [0, \log d]$, the tight upper bound of $E_x \equiv \{\cve AB(\rho) \mid \ent(\rho) = x \land \rho \in \St(AB) \}$ is $x$.
\end{lemma}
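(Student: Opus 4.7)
The plan is to prove the two halves of the claim in sequence: first show that $x$ is indeed an upper bound on $E_x$, then exhibit a single state $\rho^\ast \in \St(AB)$ with $\ent(\rho^\ast) = x$ and $\cve AB(\rho^\ast) = x$ to certify tightness.

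For the upper bound, I would unpack the definition: any $\rho \in \St(AB)$ with $\ent(\rho) = x$ satisfies $\cve AB(\rho) = \ent(\rho) - \ent(\rho_B) = x - \ent(\rho_B)$, and since von Neumann entropy is non-negative, $\ent(\rho_B) \geq 0$, giving $\cve AB(\rho) \leq x$. No cleverer inequality is needed for this direction; notice that the restriction $x \leq \log d$ is not even used here, but it becomes essential for achievability.

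For tightness, the natural construction is a product state of the form $\rho^\ast = \sigma_A \otimes \ketbra{0}{0}_B$ for some $\sigma_A \in \St(A)$. For such a state, $\rho^\ast_B = \ketbra{0}{0}$ is pure, so $\ent(\rho^\ast_B) = 0$ and hence $\cve AB(\rho^\ast) = \ent(\sigma_A)$ while $\ent(\rho^\ast) = \ent(\sigma_A)$. It therefore suffices to find a $\sigma_A \in \St(A)$ with $\ent(\sigma_A) = x$, and here the hypothesis $x \in [0, \log d]$ is exactly what is needed, since the entropy on $\St(A)$ ranges over $[0, \log d]$. To make this rigorous I would mimic the intermediate-value-theorem argument used in Lemma \ref{lem:lower_bound}: consider the continuous path $p \mapsto p \, \ketbra{0}{0} + (1-p) I_A / d$ in $\St(A)$, whose entropy is $\log d$ at $p=0$ and $0$ at $p=1$; by continuity of the von Neumann entropy together with the intermediate value theorem, there exists $p_x \in [0,1]$ producing entropy exactly $x$. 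Setting $\sigma_A$ to be this state completes the construction.

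The main obstacle, such as it is, is almost cosmetic: cleanly invoking the intermediate value theorem to produce a $\sigma_A$ of prescribed entropy. Every ingredient is already present in Lemma \ref{lem:lower_bound}, and no deep quantum inequality (no Araki–Lieb, no strong subadditivity) is required; the lemma is genuinely easier than its lower-bound counterpart because we impose a constraint only on $\rho_B$ (taking it pure) rather than needing to tune both marginals simultaneously.
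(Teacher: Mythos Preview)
Your proposal is correct and essentially identical to the paper's proof: both establish the upper bound via $\ent(\rho_B)\geq 0$ and achieve tightness with a product state $\sigma_A \otimes \sigma_B$ where $\sigma_B$ is pure and $\ent(\sigma_A)=x$. The only difference is that you explicitly justify the existence of such a $\sigma_A$ via the intermediate value theorem, whereas the paper simply asserts it.
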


\begin{proof}
    Consider $\rho \in \St(AB) \mid \ent(\rho) = x$. Since $\ent(\rho_B) \geq 0$, we have $\cve AB(\rho) \leq x$. Thus, an upper bound of $E_x$ is $x$. It now remains to show that $\forall x \in [0, \log d]$, $x \in E_x$. For a given $x \in [0, \log d]$, consider states of the form $\sigma = \sigma_A \otimes \sigma_B$, where $\ent(\sigma_A) = x$ and $\sigma_B$ is any pure state. We have $\ent(\sigma) = x+0 = x$, which implies $\cve AB(\sigma) = x - 0 = x \in E_x$. 

Therefore, since $x$ is an upperbound of $E_x$ and $x \in E_x$, it is a tight upper bound.
\end{proof}

\begin{lemma}
    \label{lem:second_upper}
    $\forall x \in [\log d, 2\log d]$, the tight upper bound of $E_x \equiv \{\cve AB(\rho) \mid \ent(\rho) = x \land \rho \in \St(AB)\}$ is $\log d$.
\end{lemma}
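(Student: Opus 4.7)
The plan is to prove Lemma \ref{lem:second_upper} in two parts: first that $\log d$ is an upper bound of $E_x$, and then that this bound is attained for every $x \in [\log d, 2\log d]$.

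For the upper-bound part, I would invoke the fact already used in the proof of Lemma \ref{lem:ncve_auni}, namely that the conditional entropy of any bipartite state $\rho_{AB} \in \St(AB)$ satisfies $\cve AB(\rho_{AB}) \leq \log d_A = \log d$, with equality if and only if $\rho_{AB}$ has the form $\frac{I_A}{d} \otimes \sigma_B$ for some $\sigma_B \in \St(B)$. In particular, for any $\rho$ with $\ent(\rho) = x$, we trivially have $\cve AB(\rho) \leq \log d$, so $\log d$ is indeed an upper bound of $E_x$.

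For the attainment part, I would explicitly construct, for each $x \in [\log d, 2\log d]$, a state $\rho \in \St(AB)$ with $\ent(\rho) = x$ and $\cve AB(\rho) = \log d$. The natural candidate is $\rho = \frac{I_A}{d} \otimes \rho_B$ for a suitable $\rho_B \in \St(B)$. Such a product state has $\rho_B$ as its reduced state on $B$, and
\begin{equation}
    \ent(\rho) = \log d + \ent(\rho_B), \qquad \cve AB(\rho) = \ent(\rho) - \ent(\rho_B) = \log d.
\end{equation}
Therefore I need $\ent(\rho_B) = x - \log d$, which lies in $[0, \log d]$ precisely because $x \in [\log d, 2\log d]$. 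Since the von Neumann entropy on $\St(B)$ is a continuous function (Box 11.2 in \cite{Nielsen2000}) taking the value $0$ on any pure state and $\log d$ on $\frac{I_B}{d}$, the intermediate value theorem guarantees the existence of such a $\rho_B$ (for concreteness one may take a one-parameter family $\rho_B(p) = p \ketbra{0}{0} + (1-p)\frac{I_B}{d}$ and apply the intermediate value theorem, analogous to the argument for $\R$ in the proof of Lemma \ref{lem:lower_bound}). This yields a state in $\St(AB)$ with $\ent(\rho) = x$ and $\cve AB(\rho) = \log d \in E_x$, completing the proof.

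There is no real obstacle in this argument: the upper bound is a known identity, and the attainment reduces to choosing a reduced state on $B$ with a prescribed entropy in the feasible range $[0, \log d]$. The only mild subtlety is making sure the prescribed value $x - \log d$ actually lies in $[0, \log d]$, which is exactly the hypothesis on $x$; this explains why the lemma is stated for $x \in [\log d, 2\log d]$ rather than for all $x \in [0, 2\log d]$ (for $x < \log d$ we would need $\ent(\rho_B) < 0$, which is impossible, and the tight bound is instead $x$ as in Lemma \ref{lem:first_upper}).
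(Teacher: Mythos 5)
Your proof is correct and follows essentially the same route as the paper: the upper bound $\cve AB(\rho) \leq \log d$ from the standard dimension bound on conditional entropy, and attainment via states of the form $\frac{I_A}{d} \otimes \rho_B$ with $\ent(\rho_B) = x - \log d$. Your explicit intermediate-value-theorem justification for the existence of $\rho_B$ with the prescribed entropy is a small detail the paper leaves implicit, but it is not a substantive difference.
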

\begin{proof}
    From Theorem 11.4.1 in \cite{Wilde2013} and Theorem 11.8 in \cite{Nielsen2000}, we have $\forall \rho \in \St(AB), \; \cve AB(\rho) \leq \ent(\rho_A) \leq \log d$. Therefore, $\log d$ is also an upper bound for $E_x$. It now remains to show that $\forall x \in [\log d, 2 \log d], \; \log d \in E_x$. Consider states of the form $\sigma = \frac{I}{d} \otimes \sigma_B$ where $\ent(\sigma_B) = x - \log d$. We have $\ent(\sigma) = x$ which implies that $\cve AB(\sigma) = \log d \in E_x$. 
    Therefore, since $\log d$ is an upper bound for $E_x$ and $\log d \in E_x$, it is a tight upper bound.
\end{proof}

Theorem \ref{thm:bounds} follows from the lower bound proved in Lemma \ref{lem:lower_bound} and the upper bounds proved in  Lemmas \ref{lem:first_upper} and \ref{lem:second_upper}.

\section{Conclusion}
\label{sec:conclusion}
In this paper, we introduce the class of $A$-unital channels, and show that this is equal to the largest class of completely free channels for the set of states with non-negative conditional entropy $\left(\cvenn AB \right)$. We also show that this is the largest class of channels that do not decrease the conditional entropy of a state. We discuss the relation between $A$-unital channels, separable channels and entanglement-breaking channels. Furthermore, we extended the characterization of the set of states that possess non-negative conditional entropy even under global unitary transformations ($\acvenn AB$) from the $2 \otimes 2$ case to the $d \otimes d$ case. We demonstrate that unital channels are the largest class of completely free operations for this set. Finally, we provide upper and lower bounds for the conditional entropy of a state as a function of the entropy of a state. We conclude with some questions for further research that emerge from our work. 

\begin{enumerate}

    \item \textbf{Characterization of $\acvenn AB$ for the $d_A \otimes d_B$ case.} In this work, we show that when $d_A = d_B = d$, $\acvenn AB$ is the set of states in $\St(AB)$ for which $\ent(\rho_{AB}) \geq \log d$. An interesting problem is the extension of this characterization to the case $d_A \neq d_B$. This is important for the full characterization of completely free operations for $\acvenn AB$.    

    \item \textbf{Decomposition of A-unital channels and a Choi matrix representation.} A necessary and sufficient condition for the unitality of a channel is for it to be expressed as an affine combination of unitary quantum channels \cite{Mendl2009}. Providing a similar decomposition for A-unital channels is an open problem. Moreover, separable channels, completely PPT-preserving channels and entanglement-breaking channels all have elegant characterizations in terms of Choi matrices \cite{Cirac2001,Rains1999,Rains2001, Regula2019, Wilde2013, Jiang2013}. A similar characterization remains to be given for $A$-unital channels.  
     
    \item \textbf{State conversion under A-unital channels.} According to Uhlmann's theorem, a state $\rho$ can be converted to state $\sigma$ under unital channels if and only if $\sigma \prec \rho$, that is $\rho$ majorizes $\sigma$ \cite{Nielsen2002, Li2013}. What are the conditions for state conversion under $A$-unital channels?    
    
\end{enumerate}

\paragraph{\textit{Note added:}} \textit{Also see related independent work by Brandsen et al. \cite{Brandsen2021}}

\acknowledgements{We gratefully acknowledge Prof. Arun K. Pati for his enriching discussions, and Mohammad Alhejji for his fruitful comments.}

\onecolumn\newpage
\appendix

\end{document}